\theoremstyle{plain} \newtheorem{theorem}{Theorem}
\theoremstyle{plain} \newtheorem{proposition}{Proposition}
\theoremstyle{plain} \newtheorem{corollary}{Corollary}
\theoremstyle{remark} \newtheorem{remark}{Remark}
\theoremstyle{remark} \newtheorem{lemma}{Lemma}
\theoremstyle{plain} \newtheorem{definition}{Definition}
\theoremstyle{plain} \newtheorem{assumption}{Assumption}
\theoremstyle{plain} \newtheorem{fact}{Fact}
\begin{document}
\title{Optimal Spectrum Sharing in MIMO Cognitive Radio Networks via Semidefinite Programming}
\author{Ying~Jun~(Angela)~Zhang,~\IEEEmembership{Member,~IEEE} and Anthony~Man-Cho~So
\thanks{This work was supported in part by the Competitive Earmarked
Research Grant (Project Number 418707) established under the
University Grant Committee of Hong Kong, the Direct Grant for Research (Project Numbers 2050401 and 2050439) established by The Chinese University of Hong Kong, and Project \#MMT-p2-09 of the Shun Hing Institute of Advanced Engineering, The Chinese University of Hong Kong.}
\thanks{Y. J. Zhang is with the Department of Information Engineering and the Shun Hing Institute of Advanced Engineering, The Chinese University of Hong Kong, Hong Kong (email: yjzhang@ie.cuhk.edu.hk).}
\thanks{A. M.-C. So is with the Department of Systems Engineering and Engineering Management and the Shun Hing Institute of Advanced Engineering, The Chinese University of Hong Kong, Hong Kong (email: manchoso@se.cuhk.edu.hk)}
}

\maketitle

\begin{abstract}
In cognitive radio (CR) networks with multiple-input multiple-output (MIMO) links, secondary users (SUs) can exploit ``spectrum holes" in the space domain to access the spectrum allocated to a primary system.  However, they need to suppress the interference caused to primary users (PUs), as the secondary system should be transparent to the primary system. In this paper, we study the optimal secondary-link beamforming pattern that balances between the SU's throughput and the interference it causes to PUs. In particular, we aim to maximize the throughput of the SU, while keeping the interference temperature at the primary receivers below a certain threshold.

Unlike traditional MIMO systems, SUs may not have the luxury of knowing the channel state information (CSI) on the links to PUs. This presents a key challenge for a secondary transmitter to steer interference away from primary receivers. In this paper, we consider three scenarios, namely when the secondary transmitter has complete, partial, or no knowledge about the channels to the primary receivers. In particular, when complete CSI is not available, the interference-temperature constraints are to be satisfied with high probability, thus resulting in chance constraints that are typically hard to deal with. Our contribution is fourfold. First, by analyzing the distributional characteristics of MIMO channels, we propose a unified homogeneous quadratically constrained quadratic program (QCQP) formulation that can be applied to all three scenarios, in which different levels of CSI knowledge give rise to either deterministic or probabilistic interference-temperature constraints. The homogeneous QCQP formulation, though non-convex, is amenable to semidefinite programming (SDP) relaxation methods. Secondly, we show that the SDP relaxation admits no gap when the number of primary links is no larger than two. A polynomial-time algorithm is presented to compute the optimal solution to the QCQP problem efficiently. Thirdly, we propose a randomized polynomial-time algorithm for constructing a near-optimal solution to the QCQP problem when there are more than two primary links. Finally, we show that when the secondary transmitter has no CSI on the links to primary receivers, the optimal solution to the QCQP problem can be found by a simple matrix eigenvalue-eigenvector computation, which can be done much more efficiently than solving the QCQP directly.

\end{abstract}

\begin{IEEEkeywords}
Cognitive radio networks, MIMO, Semidefinite programming
\end{IEEEkeywords}

\section{Introduction}
Cognitive radio (CR), which allows secondary users (SUs) to opportunistically utilize the frequency spectrum originally assigned to licensed primary users (PUs), is a promising approach to alleviate spectrum scarcity \cite{H05}. In CR networks with single-antenna nodes, SUs can transmit only when it detects a spectrum hole in either time or frequency domain, so as to avoid causing harmful interference to PUs \cite{P09,TMS09}. Such schemes, however, only work when the primary system severely underutilizes the assigned spectrum. Otherwise, the secondary system would not have adequate chances to access the wireless channel.

Recent development in multiple-input multiple-output (MIMO) antenna techniques opens up a new dimension, namely space, for co-channel users to coexist without causing severe interference to each other \cite{GSS+03}. Indeed, in CR networks where stations are equipped with multiple antennae, SUs can transmit at the same time as the PUs through space-domain signal processing. The nature of CR networks gives rise to several challenging issues that do not exist in traditional MIMO systems. First, SUs are solely responsible for suppressing the interference they cause to PU receivers, as the primary system should not be aware of the existence of the secondary system. That is, we cannot rely on the PUs to do receiver-side interference cancellation.  Secondly, SUs may not have the luxury of knowing the channel state information (CSI) on the links to PUs, as the primary system would not deliberately provide their channel estimation to the secondary system. This imposes difficulty on transmitter-side pre-interference cancellation at SU transmitters. It is therefore necessary to revisit space-domain signal processing in the context of MIMO CR networks. In particular,  SUs need to configure their beamforming patterns in a way that balances between their own throughput and the interference they cause to PUs.

Multi-antenna CR networks were recently studied in \cite{KG08,ZL08,SF09}.
Assuming that CSI on all links is perfectly known to the SUs,  \cite{KG08} formulates the SU beamforming problem as a non-convex optimization problem. A semi-distributed algorithm is proposed to obtain a local optimal solution to the problem. On the other hand, \cite{SF09} assumes that the PU can act as a scheduler for SU transmissions. Under this idealistic assumption, an opportunistic orthogonalization scheme is proposed. In \cite{ZL08}, Zhang and Liang studied capacity-achieving transmit spatial spectrum for a single SU, assuming that the SU has full CSI and there is no interference from PUs to the SU. Insightful solution methods are proposed to provide better intuition that may not be obtainable from a numerical optimization perspective. The issue of imperfect CSI estimation is considered in \cite{ZLC08, GLM09, PVS+09} for multiple-input single-output (MISO) CR networks. Therein, robust optimization problems are formulated to ensure the service qualities for both SUs and PUs are satisfactory in the worst case.

In this paper, we study the problem of optimal secondary-link beamforming. Specifically, we aim to maximize the throughput of the SU under the constraint that the interference to PU receivers is below a certain threshold. In contrast to the previous work \cite{KG08,ZL08,SF09}, we consider three practical scenarios: (1) when the SU transmitter knows both the channel matrices from it to PU receivers and the beamforming patterns at PU receivers; (2) when the SU transmitter does not know the beamformer at PU receivers; (3) when the secondary transmitter knows neither the channel matrices to PU receivers nor the beamforming patterns at PU receivers. Note that the deterministic interference-temperature constraints could be too stringent if the SU does not have full CSI (which is the case in the second and third scenarios), as the SU will need to consider the worst-case channel realization when configuring the beamformer. Fortunately, many wireless applications can tolerate occasional dips in the service quality. In order to have a more efficient utilization of the spectrum, we can take advantage of this opportunity and replace the deterministic constraints by probabilistic interference constraints (also referred to as chance constraints). Chance constraints, however, are typically tougher to deal with than deterministic constraints.
The contribution of this paper is fourfold:
\begin{itemize}
  \item We propose a unified homogeneous quadratically constrained quadratic program (QCQP) formulation that can be applied to all three scenarios mentioned above. In particular, the homogeneous QCQP formulation can accommodate both deterministic and probabilistic interference-temperature constraints.
  \item The QCQP formulation, though non-convex, is amenable to semidefinite programming (SDP) relaxation methods. We show that the SDP relaxation admits no gap with the true optimal solution when the number of PUs is no larger than two. A polynomial-time algorithm is presented to compute the optimal solution efficiently.
  \item When there are more than two PUs, we propose a randomized polynomial-time algorithm that can produce a provably near-optimal solution. Numerical results show that the solution produced by our algorithm almost achieves the optimal value.
  \item In the third scenario where the SU transmitter knows neither the channel matrices nor the beamformer at PU receivers, we show that the optimal solution can be obtained very efficiently through a simple matrix eigenvalue-eigenvector computation. That is, there is no need to solve the QCQP problem in this case.
\end{itemize}

We should emphasize that the incomplete CSI here is not to be confused with that in \cite{ZLC08,GLM09,PVS+09}, where it is assumed that the SU knows the CSI on all links, except that there may be uncertainty in the channel estimation. Robust optimization techniques are employed in these papers to deal with the worst-case channel realization.  The case where only channel statistics is known to the SU transmitter is also studied in \cite{PVS+09}.  This is similar to a special case of the third scenario considered in our paper, when the receivers have only one antenna.

The rest of this paper is organized as follows. The system model is described in Section \ref{sec:2}. In Section \ref{section:3}, we formulate the SU beamforming problem as a series of homogeneous QCQP problems for different scenarios. The SDP relaxations of these homogeneous QCQP problems are then introduced in Section \ref{section:4}.  In Section \ref{sec:exact}, a polynomial-time algorithm for finding an optimal solution to the QCQP problem when the number of primary users is no larger than two is presented. In Section \ref{sec:largeK}, we propose a randomized polynomial-time algorithm to find a near-optimal beamforming solution when there are more than two PUs.  In Section \ref{sec:simulation}, the performance of the proposed schemes is evaluated via simulations. Finally, the paper is concluded in Section \ref{section:conclusions}.

\section{System Model}\label{sec:2}
\subsection{System Setup}
In this paper, we consider a CR network in which a secondary user intends to share the spectrum with a primary system consisting of $K$ primary links.  We shall discuss the possibility of extending the proposed approach to a multiple-secondary-link scenario in Section \ref{section:conclusions}. In the sequel, we use the subscript $S$ to denote the secondary link and the subscript $k$ to denote the $k^{th}$ primary link. Let $M_S$ (or $N_S$) and $M_k$ (or $N_k$) denote the number of transmit (or receive) antennae of the secondary and primary links, respectively.  We use $\mathbf{H}_{S,S} \in \mathbb{C}^{N_S\times M_S}$ to denote the channel matrix from the secondary transmitter to the secondary receiver, and $\mathbf{H}_{k,S} \in \mathbb{C}^{N_k\times M_S}$, $\mathbf{H}_{S,k} \in \mathbb{C}^{N_S\times M_k}$ and $\mathbf{H}_{k,j} \in \mathbb{C}^{N_k\times M_j}$ to denote the channel matrices from the secondary transmitter to the $k^{th}$ primary receiver, from the $k^{th}$ primary transmitter to the secondary receiver, and from the $j^{th}$ primary transmitter to the $k^{th}$ primary receiver, respectively. We assume a Rayleigh fading and rich scattering environment, so that the entries of the channel matrices are independently and identically distributed (i.i.d.) complex Gaussian random variables with zero mean and unit variance. As pointed out in \cite{B03, MZSY08}, in an interference-limited environment, each active link should only transmit one data stream at a time to avoid excessive interference to other links. In this case, we use scalars $x_S$ and $x_k$ to denote the transmitted signals by the secondary transmitter and the $k^{th}$ primary transmitter, respectively. Without loss of generality, suppose that $\mathbb{E}[|x_S|^2]=1$ and $\mathbb{E}[|x_k|^2]=1$.

Let $\mathbf{t}_S$ and $\mathbf{r}_S$ denote the beamforming vectors at the secondary transmitter and receiver, respectively. Likewise, let $\mathbf{t}_k$ and $\mathbf{r}_k$ be the beamforming vectors at the $k^{th}$ primary transmitter and receiver. In particular, we have $\|\mathbf{t}_S\|_2^2=P_S$ and $\|\mathbf{t}_k\|_2^2=P_k$, where $P_S$ and $P_k$ are the transmit power of the secondary and the $k^{th}$ primary links, respectively. Without loss of generality, we normalize the receive beamforming vectors so that $\|\mathbf{r}_S\|_2^2=1$ and $\|\mathbf{r}_k\|_2^2=1$.  Then, the received signal at the secondary receiver after receive beamforming is
$$
y_S=\sqrt{\alpha_{S,S}} \mathbf{r}_S^H\mathbf{H}_{S,S}\mathbf{t}_S x_S +\sum_{k=1}^K \sqrt{\alpha_{S,k}} \mathbf{r}_S^H\mathbf{H}_{S,k}\mathbf{t}_k x_k+\mathbf{r}_S^H \mathbf{n}_S,
$$
where $\alpha_{S,S}$ and $\alpha_{S,k}$ denote the path losses from the secondary transmitter to the secondary receiver and from the $k^{th}$ primary transmitter to the secondary receiver, respectively, and $\mathbf{n}_S\sim\mathcal{CN}(0,N_0 \mathbf{I})$ denotes a circular complex additive Gaussian noise vector at the secondary receiver. As such, the signal to interference and noise ratio (SINR) on the secondary link is given by
\begin{eqnarray} \label{eq:SINR}
   \gamma_S &=& \frac{\alpha_{S,S}\big|\mathbf{r}_S^H\mathbf{H}_{S,S}\mathbf{t}_S\big|^2}{\sum_{k=1}^K\alpha_{S,k} \big|\mathbf{r}_S^H\mathbf{H}_{S,k}\mathbf{t}_k\big|^2 + \|\mathbf{r}_S\|_2^2N_0}\nonumber\\
   &=&\frac{\alpha_{S,S}\big|\mathbf{r}_S^H\mathbf{H}_{S,S}\mathbf{t}_S\big|^2}{\sum_{k=1}^K\alpha_{S,k} \big|\mathbf{r}_S^H\mathbf{H}_{S,k}\mathbf{t}_k\big|^2 + N_0}.
\end{eqnarray}
The secondary link's transmission causes an interference signal $\sqrt{\alpha_{k,S}} \mathbf{r}_k^H \mathbf{H}_{k,S} \mathbf{t}_S x_S$ at the output of the $k^{th}$ primary receiver, resulting in an interference power of $
\alpha_{k,S}\big|\mathbf{r}_k^H \mathbf{H}_{k,S}\mathbf{t}_S\big|^2$,
where $\alpha_{k,S}$ is the path loss from the secondary transmitter to the $k^{th}$ primary receiver.

Before leaving this subsection, we emphasize that channel matrices on different links are independent. Furthermore, the beamforming vectors $\mathbf{t}_k$ and $\mathbf{r}_k$ are solely determined by the channels between the nodes in the primary system, as the secondary system is transparent to the primary system. Therefore, $\mathbf{t}_k$ and $\mathbf{r}_k$ are independent of $\mathbf{H}_{k,S}$, $\mathbf{H}_{S,S}$, and $\mathbf{H}_{S,k}$.

\subsection{Objective and Assumptions}

In this paper, we aim to find for the secondary link the optimal beamforming vectors $\mathbf{t}_S$ and $\mathbf{r}_S$ so that the SINR on the secondary link is maximized, while the interference to primary link $k$ is below a tolerable threshold $\epsilon_k$. Mathematically, this can be formulated as the following optimization problem:
\begin{subequations}\label{eqn:generalform}
\begin{eqnarray}
\max_{\mathbf{t}_S, \mathbf{r}_S}&& \gamma_S \\
\text{s.t. }&&\alpha_{k,S}\big| \mathbf{r}_k^H \mathbf{H}_{k,S}\mathbf{t}_S\big|^2 \leq \epsilon_k \qquad\forall k=1,\ldots,K, \label{constraint:10b}\\
&&\|\mathbf{t}_S\|_2^2\leq P_{S,max},
\end{eqnarray}
\end{subequations}
where $P_{S,max}$ is the maximum transmission power of the secondary link.
Throughout this paper, we assume that the path losses $\alpha_{S,S}$, $\alpha_{S,k}$, $\alpha_{k,S}$, and $\alpha_{k,j}$ do not vary significantly within the time period of interest, and hence are known to all stations. It is also reasonable to assume that the secondary user knows its own channel $\mathbf{H}_{S,S}$ at both transmitter and receiver sides. Moreover, the secondary receiver can estimate $\mathbf{H}_{S,k}\mathbf{t}_k$ for all $k$ by overhearing the transmission of primary transmitters.

In practice, however, the secondary transmitter may not know the CSI on the links between primary receivers, because primary receivers would not purposely provide CSI to the secondary system. In this paper, we are interested in the following three different scenarios:

\noindent \emph{\textbf{Scenario 1:}} The secondary transmitter has perfect knowledge of the vector $\mathbf{H}_{k,S}^H\mathbf{r}_k$.

This scenario corresponds to a time division duplex (TDD) system in which channels are reciprocal and the primary receivers use the same beamforming vectors for both reception and transmission. In this case, the secondary transmitter can estimate $\mathbf{H}_{k,S}^H \mathbf{r}_k$ by overhearing the transmission of primary receivers.

\noindent \emph{\textbf{Scenario 2:}} The secondary transmitter knows $\mathbf{H}_{k,S}$ but not $\mathbf{r}_k$.

This scenario corresponds to a TDD system in which primary receivers do not use the beamforming vector $\mathbf{r}_k$ for transmission.

\noindent \emph{\textbf{Scenario 3:}} The secondary transmitter has no knowledge about $\mathbf{H}_{k,S}$ and $\mathbf{r}_k$.

This scenario corresponds to the case where the secondary link has no way to estimate the channel from the primary receiver.

Note that in both Scenarios 2 and 3, constraints \eqref{constraint:10b} are no longer well defined. Indeed, for any given $\mathbf{t}_S$, the value of the left-hand-side is uncertain to the secondary transmitter, as the realizations of $\mathbf{r}_k$ and/or $\mathbf{H}_{k,S}$ are unknown. Therefore, a revision is necessary. One way is to guarantee that the constraints are always satisfied regardless of the realizations of $\mathbf{r}_k$ and $\mathbf{H}_{k,S}$. Then, \eqref{constraint:10b} can be replaced by
\begin{equation}\label{eqn:worstcase}
\max ~\alpha_{k,S}\big| \mathbf{r}_k^H \mathbf{H}_{k,S}\mathbf{t}_S\big|^2 \leq \epsilon_k \qquad\forall k=1,\ldots,K,
\end{equation}
where the maximization is taken over $\mathbf{r}_k$ for Scenario 2, and over $\mathbf{r}_k$ and $\mathbf{H}_{k,S}$ for Scenario 3.

Besides the worst-case guarantee, in practical applications, we may allow the interference to exceed a certain threshold $\epsilon_k$ with a small outage probability $\delta_k$. In this case, \eqref{constraint:10b} can be replaced by
\begin{equation}\label{eqn:chance}
\Pr\left\{\alpha_{k,S}\big| \mathbf{r}_k^H \mathbf{H}_{k,S}\mathbf{t}_S\big|^2 \leq \epsilon_k\right\}\geq 1-\delta_k \qquad\forall k=1,\ldots,K,
\end{equation}
where the probability is taken over $\mathbf{r}_k$ for Scenario 2, and over $\mathbf{r}_k$ and $\mathbf{H}_{k,S}$ for Scenario 3.  Note that \eqref{eqn:chance} is equivalent to \eqref{eqn:worstcase} when $\delta_k=0$ for all $k$.

\subsection{Distribution of $\mathbf{r}_k$}
In this subsection, we discuss the probability distribution of the beamforming vector $\mathbf{r}_k$ at PU receivers. The results will be useful later when we address the chance constraints in \eqref{eqn:chance}.

As mentioned, $\mathbf{r}_k$ is solely determined by the channels between primary nodes. Consider the matrix
\begin{eqnarray*}
\mathbf{\hat{H}}&=&\big[\sqrt{\alpha_{k,1}}\mathbf{H}_{k,1}\mathbf{t}_1, \ldots, \sqrt{\alpha_{k,k}}\mathbf{H}_{k,k}\mathbf{t}_k ,\nonumber\\
&&~~ \ldots, \sqrt{\alpha_{k,K}}\mathbf{H}_{k,K}\mathbf{t}_K\big] \\
&:=&\big[\mathbf{\hat{h}}_1,  \ldots, \mathbf{\hat{h}}_k, \ldots,\mathbf{\hat{h}}_K\big],
\end{eqnarray*}
where $\alpha_{k,j}$ is the path loss from the $j^{th}$ PU transmitter to the $k^{th}$ PU receiver, and $\mathbf{\hat{h}}_1, \ldots,\mathbf{\hat{h}}_K$ are the columns of $\mathbf{\hat{H}}$, which are independent of each other.  Let $\mathbf{\hat{H}}_{(-k)}$ be the $N_k\times(K-1)$ matrix obtained by deleting the $k^{th}$ column of $\mathbf{\hat{H}}$. Then, in general, the vector $\mathbf{r}_k$ takes the form
\begin{equation}\label{eqn:rk}
\mathbf{r}_k=\beta_k \mathbf{W}_k \mathbf{\hat{h}}_k,
\end{equation}
where $\beta_k=\frac{1}{|| \mathbf{W}_k \mathbf{\hat{h}}_k||_2}$ is a normalization factor that ensures $\|\mathbf{r}_k\|_2^2=1$, and $\mathbf{W}_k$ is a random Hermitian matrix that is independent of $\mathbf{\hat{h}}_k$. In particular, we have $\mathbf{W}_k=\mathbf{I}$ for the matched-filter (MF) receiver,
\begin{equation}\label{eqn:ZF}
\mathbf{W}_k=\mathbf{I}-\mathbf{\hat{H}}_{(-k)}(\mathbf{\hat{H}}_{(-k)}^H\mathbf{\hat{H}}_{(-k)})^{-1}\mathbf{\hat{H}}_{(-k)}^H
\end{equation}
for the zero-forcing (ZF) receiver, and
\begin{equation}\label{eqn:MMSE}
\mathbf{W}_k=(\mathbf{\hat{H}}_{(-k)}\mathbf{\hat{H}}_{(-k)}^H+N_0\mathbf{I})^{-1}
\end{equation}
for the minimum-mean-squared-error (MMSE) receiver. Before proceeding, let us state a definition and introduce two assumptions.
\begin{definition}
A random vector $\mathbf{x}$ is called a normalized complex Gaussian vector if $\mathbf{x}=\frac{\mathbf{z}}{\|\mathbf{z}\|_2}$, where $\mathbf{z}\sim \mathcal{CN}(\mathbf{0}, \mathbf{I})$. In particular, $\|\mathbf{x}\|_2=1$. A normalized complex Gaussian vector is an isotropically distributed unit vector\footnote{A unit vector is said to be isotropically distributed if it is equally likely to point in any direction in the complex space. In other words, the vector is uniformly distributed on a complex unit sphere.}.
\end{definition}

\begin{assumption}\label{ass:1}
The $k^{th}$ column of $\hat{\mathbf{H}}$, i.e., $\mathbf{\hat{h}}_k$, has the same distribution as $\alpha \mathbf{x}$, where $\mathbf{x}$ is a normalized complex Gaussian vector, and $\alpha$ is a scaling factor.
\end{assumption}
Assumption \ref{ass:1} is valid for most practical MIMO systems. Consider the singular value decomposition (SVD) of $\mathbf{H}_{k,k}=\mathbf{U}\mathbf{\Lambda}\mathbf{V}^H$, where $\mathbf{U}$  and $\mathbf{V}$ are unitary matrices containing the left and right singular vectors, respectively, and $\mathbf{\Lambda}$ is a diagonal matrix containing the singular values. It is known that the columns of $\mathbf{U}$ and $\mathbf{V}$ have the same distribution as a normalized complex Gaussian vector \cite{MH99}. In the case of single-user precoding, it is optimal to set $\mathbf{t}_k$ to be proportional to $\mathbf{v}_1$, the right singular vector corresponding to the maximum singular value $\lambda_1$ \cite{RC98}. As a result, $\mathbf{\hat{h}}_k$ is proportional to $\lambda_1\mathbf{u}_1$, where $\mathbf{u}_1$ is the left singular vector corresponding to the maximum singular value. Hence, Assumption \ref{ass:1} is valid. It can also be shown that the assumption is valid when linear multiuser precoding is deployed.

\begin{assumption}\label{ass:2}
The entries of $\mathbf{\hat{H}}_{(-k)}$ are independent complex Gaussian random variables.
\end{assumption}
Assumption \ref{ass:2} is in general valid.  Indeed, it is obvious that the columns of $\mathbf{\hat{H}}_{(-k)}$ are independent, as they are related to channel matrices on different links. Moreover, for a given $\mathbf{t}_j$, we have $\mathbf{\hat{h}}_j\sim\mathcal{CN}(\mathbf{0}, \alpha_{k,j}||\mathbf{t}_j||_2^2\mathbf{I})$, thus implying that $\mathbf{\hat{h}}_j$ has independent entries. Being the transmit precoding vector of the $j^{th}$ primary link,  $\mathbf{t}_j$ is typically a function of $\mathbf{H}_{j,j}$ and is independent of $\mathbf{H}_{k,j}$. This justifies the validity of Assumption \ref{ass:2}.

By Assumption \ref{ass:2}, we can write $\mathbf{\hat{H}}_{(-k)}$ as $\mathbf{\hat{H}}_{(-k)} =\mathbf{\tilde{H}} \mathbf{A}$, where
\begin{eqnarray*}
 \mathbf{A}& =& \mathrm{diag}(\sqrt{\alpha_{k,1}}\|\mathbf{t}_1\|_2, \ldots, \sqrt{\alpha_{k,k-1}}\|\mathbf{t}_{k-1}\|_2, \nonumber\\
&&~~\sqrt{\alpha_{k,k+1}}\|\mathbf{t}_{k+1}\|_2, \ldots, \sqrt{\alpha_{k,K}}\|\mathbf{t}_K\|_2), 
\end{eqnarray*}
and $\mathbf{\tilde{H}}$ is an $N_k\times (K-1)$ matrix with independent standard complex Gaussian entries. Let $\mathbf{\tilde{H}}=\mathbf{\tilde{U}}\mathbf{\tilde{\Lambda}}\mathbf{\tilde{V}}^H$ be the SVD of $\mathbf{\tilde{H}}$. It is known that $\mathbf{\tilde{U}}$ and $\mathbf{\tilde{V}}$ are isotropically distributed unitary matrices\footnote{A unitary matrix is said to be isotropically distributed if its probability density is unchanged when premultiplied by a deterministic unitary matrix.}, and that $\mathbf{\tilde{U}},\mathbf{\tilde{V}},\mathbf{\tilde{\Lambda}}$ are independent \cite{MH99}.

\begin{proposition} \label{pro:1}
$\mathbf{r}_k$ has the same distribution as a normalized complex Gaussian vector as long as the random Hermitian matrix $\mathbf{W}_k$ is a unitarily invariant matrix\footnote{A random Hermitian matrix $\mathbf{W}$ is called unitarily invariant if the joint distribution of its entries equals that of $\mathbf{G}\mathbf{W}\mathbf{G}^H$ for any unitary matrix $\mathbf{G}$ independent of $\mathbf{W}$.}
\end{proposition}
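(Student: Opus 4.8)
The plan is to show that $\mathbf{r}_k$ is an \emph{isotropically distributed unit vector}; by the definition given above, together with uniqueness of the rotation-invariant probability law on the complex unit sphere (the unitary group acts transitively on the sphere, so there is only one such law), this is precisely the statement that $\mathbf{r}_k$ has the same distribution as a normalized complex Gaussian vector. Since $\|\mathbf{r}_k\|_2=1$ by construction, the only thing left to verify is the invariance $\mathbf{G}\mathbf{r}_k\stackrel{d}{=}\mathbf{r}_k$ for every \emph{deterministic} unitary matrix $\mathbf{G}$, where $\stackrel{d}{=}$ denotes equality in distribution.

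First I would strip away the irrelevant magnitude of $\hat{\mathbf{h}}_k$. Because the normalization in \eqref{eqn:rk} cancels any positive scalar multiplying $\hat{\mathbf{h}}_k$, only the \emph{direction} of $\hat{\mathbf{h}}_k$ enters $\mathbf{r}_k$; by Assumption \ref{ass:1} this direction is distributed as a normalized complex Gaussian vector $\mathbf{x}$, and since $\hat{\mathbf{h}}_k$ is independent of $\mathbf{W}_k$, I may assume without changing the law of $\mathbf{r}_k$ that $\mathbf{x}$ is isotropic and independent of $\mathbf{W}_k$, writing $\mathbf{r}_k=\mathbf{W}_k\mathbf{x}/\|\mathbf{W}_k\mathbf{x}\|_2$. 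The well-definedness of this ratio, i.e. $\mathbf{W}_k\mathbf{x}\neq\mathbf{0}$ almost surely, holds for the MF, ZF, and MMSE filters and can be noted in passing. The key algebraic step is then the identity $\mathbf{G}\mathbf{W}_k\mathbf{x}=(\mathbf{G}\mathbf{W}_k\mathbf{G}^H)(\mathbf{G}\mathbf{x})$, which, combined with the norm-preservation of $\mathbf{G}$, gives $\mathbf{G}\mathbf{r}_k=(\mathbf{G}\mathbf{W}_k\mathbf{G}^H)(\mathbf{G}\mathbf{x})/\|(\mathbf{G}\mathbf{W}_k\mathbf{G}^H)(\mathbf{G}\mathbf{x})\|_2$.

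The heart of the argument, and the step I expect to require the most care, is to promote two \emph{marginal} distributional identities to a \emph{joint} one. Unitary invariance of $\mathbf{W}_k$ yields $\mathbf{G}\mathbf{W}_k\mathbf{G}^H\stackrel{d}{=}\mathbf{W}_k$, and isotropy of $\mathbf{x}$ yields $\mathbf{G}\mathbf{x}\stackrel{d}{=}\mathbf{x}$; individually these are not sufficient. Here I would invoke the independence of $\mathbf{W}_k$ and $\mathbf{x}$: since $\mathbf{G}\mathbf{W}_k\mathbf{G}^H$ is a function of $\mathbf{W}_k$ alone and $\mathbf{G}\mathbf{x}$ is a function of $\mathbf{x}$ alone, the two transformed objects remain independent, so their joint law factorizes into the product of the marginals, and the same factorization holds for $(\mathbf{W}_k,\mathbf{x})$. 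Matching the factors then gives the joint identity $(\mathbf{G}\mathbf{W}_k\mathbf{G}^H,\,\mathbf{G}\mathbf{x})\stackrel{d}{=}(\mathbf{W}_k,\mathbf{x})$.

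Finally I would apply the deterministic measurable map $(\mathbf{W},\mathbf{v})\mapsto \mathbf{W}\mathbf{v}/\|\mathbf{W}\mathbf{v}\|_2$ to both sides of this joint identity; equality in distribution is preserved under a common measurable map, which produces $\mathbf{G}\mathbf{r}_k\stackrel{d}{=}\mathbf{r}_k$. As $\mathbf{G}$ was an arbitrary deterministic unitary matrix, $\mathbf{r}_k$ is isotropically distributed, and the conclusion follows from the uniqueness noted in the first paragraph.
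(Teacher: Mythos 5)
Your proof is correct, and it takes a genuinely different route from the paper's. The paper proves the proposition by invoking the spectral decomposition of a unitarily invariant Hermitian matrix, $\mathbf{W}_k=\mathbf{S}\mathbf{D}\mathbf{S}^H$ with $\mathbf{S}$ an isotropically distributed unitary matrix independent of $\mathbf{D}$ (a result it cites from the random-matrix literature); it then uses rotational invariance of $\hat{\mathbf{h}}_k$ to replace $\mathbf{S}^H\hat{\mathbf{h}}_k$ by $\hat{\mathbf{h}}_k$, conditions on $\hat{\mathbf{h}}_k$ and $\mathbf{D}$ so that $\mathbf{D}\hat{\mathbf{h}}_k/\|\mathbf{D}\hat{\mathbf{h}}_k\|_2$ becomes a deterministic unit vector, and concludes by the fact that an isotropic unitary times a deterministic unit vector is an isotropic unit vector. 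You instead work directly from the definition of unitary invariance: the identity $\mathbf{G}\mathbf{W}_k\mathbf{x}=(\mathbf{G}\mathbf{W}_k\mathbf{G}^H)(\mathbf{G}\mathbf{x})$, the independence-based upgrade of the two marginal identities to the joint identity $(\mathbf{G}\mathbf{W}_k\mathbf{G}^H,\mathbf{G}\mathbf{x})\stackrel{d}{=}(\mathbf{W}_k,\mathbf{x})$, and preservation of equality in distribution under the common map $(\mathbf{W},\mathbf{v})\mapsto\mathbf{W}\mathbf{v}/\|\mathbf{W}\mathbf{v}\|_2$ give $\mathbf{G}\mathbf{r}_k\stackrel{d}{=}\mathbf{r}_k$ for every deterministic unitary $\mathbf{G}$ (which is legitimately covered by the footnote's definition, since a deterministic $\mathbf{G}$ is independent of $\mathbf{W}_k$), and uniqueness of the rotation-invariant law on the complex unit sphere finishes the argument. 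Your approach is more self-contained and arguably cleaner: it avoids the external eigendecomposition result and the slightly informal conditioning step in the paper, and you are also more careful about two points the paper glosses over, namely that the normalization cancels the (possibly random) magnitude in Assumption 1 and that $\mathbf{W}_k\mathbf{x}\neq\mathbf{0}$ almost surely so that $\mathbf{r}_k$ is well defined. The paper's route, in exchange, makes the structural reason for the result visible (the eigenbasis of $\mathbf{W}_k$ is itself isotropic) and reuses machinery that its Corollary 1 needs anyway for the ZF and MMSE cases.
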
 

\begin{proof}
Being a unitarily invariant matrix, $\mathbf{W}_k$ can be decomposed as $\mathbf{W}_k=\mathbf{S}\mathbf{D}\mathbf{S}^H$, where $\mathbf{S}$ is an isotropically distributed unitary matrix independent of the diagonal matrix $\mathbf{D}$ \cite{TV04}. Thus, $\mathbf{r}_k=\frac{\mathbf{S}\mathbf{D}\mathbf{S}^H\hat{\mathbf{h}}_k}{||\mathbf{S}\mathbf{D}\mathbf{S}^H\hat{\mathbf{h}}_k||_2^2}$. Since the distribution of $\hat{\mathbf{h}}_k$ is rotationally invariant and $\hat{\mathbf{h}}_k$ is independent of $\mathbf{S}$, $\mathbf{S}^H\hat{\mathbf{h}}_k$ has the same distribution as $\hat{\mathbf{h}}_k$. It follows that $\mathbf{r}_k$ has the same distribution as
$$\frac{\mathbf{S}\mathbf{D}\hat{\mathbf{h}}_k}{||\mathbf{S}\mathbf{D}\hat{\mathbf{h}}_k||_2^2} = \mathbf{S}\frac{\mathbf{D}\hat{\mathbf{h}}_k}{||\mathbf{D}\hat{\mathbf{h}}_k||_2^2}.$$
Upon conditioning on $\hat{\mathbf{h}}_k$ and $\mathbf{D}$, $\frac{\mathbf{D}\hat{\mathbf{h}}_k}{||\mathbf{D}\hat{\mathbf{h}}_k||_2^2}$ is a deterministic unit vector. Since multiplying any deterministic unit vector by an isotropically distributed unitary matrix results in an isotropically distributed unit vector \cite{TV04}, the unit vector $\mathbf{r}_k$ is isotropically distributed for the given $\hat{\mathbf{h}}_k$ and $\mathbf{D}$. Since this holds for any realizations of $\hat{\mathbf{h}}_k$ and $\mathbf{D}$, it follows that $\mathbf{r}_k$ is isotropically distributed, and therefore has the same distribution as a normalized complex Gaussian vector.
\end{proof}

\begin{corollary} \label{cor:1}
$\mathbf{r}_k$ has the same distribution as a normalized complex Gaussian Gaussian vector when a MF, ZF, or MMSE receiver is deployed.
\end{corollary}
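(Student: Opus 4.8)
The plan is to show that for each of the three receivers the corresponding matrix $\mathbf{W}_k$ is unitarily invariant, so that Corollary \ref{cor:1} follows at once from Proposition \ref{pro:1}. The matched-filter case is trivial: since $\mathbf{W}_k=\mathbf{I}$ is deterministic, $\mathbf{G}\mathbf{W}_k\mathbf{G}^H=\mathbf{G}\mathbf{I}\mathbf{G}^H=\mathbf{I}=\mathbf{W}_k$ for every unitary $\mathbf{G}$, so unitary invariance is immediate. The work is therefore concentrated on the ZF and MMSE cases, where $\mathbf{W}_k$ is random, and the strategy is to reduce both to the isotropic Gaussian factor isolated earlier.

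For these two cases I would exploit the factorization $\mathbf{\hat{H}}_{(-k)}=\mathbf{\tilde{H}}\mathbf{A}$, in which $\mathbf{A}$ is a deterministic invertible diagonal matrix and $\mathbf{\tilde{H}}$ has i.i.d.\ standard complex Gaussian entries. The key observation is that each $\mathbf{W}_k$ can be written as $\mathbf{W}_k=f(\mathbf{\tilde{H}})$ for a map $f$ satisfying the equivariance relation $f(\mathbf{G}\mathbf{\tilde{H}})=\mathbf{G}\,f(\mathbf{\tilde{H}})\,\mathbf{G}^H$ for every unitary $\mathbf{G}$. For ZF, substituting $\mathbf{\hat{H}}_{(-k)}=\mathbf{\tilde{H}}\mathbf{A}$ into \eqref{eqn:ZF} and cancelling the invertible factor $\mathbf{A}$ shows that $\mathbf{W}_k$ is exactly the orthogonal projector $\mathbf{I}-\mathbf{\tilde{H}}(\mathbf{\tilde{H}}^H\mathbf{\tilde{H}})^{-1}\mathbf{\tilde{H}}^H$ onto the orthogonal complement of the column space of $\mathbf{\tilde{H}}$; conjugating by $\mathbf{G}$ and using $\mathbf{G}^H\mathbf{G}=\mathbf{I}$ verifies the equivariance. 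For MMSE, substituting into \eqref{eqn:MMSE} gives $\mathbf{W}_k=(\mathbf{\tilde{H}}\mathbf{A}^2\mathbf{\tilde{H}}^H+N_0\mathbf{I})^{-1}$, and since $N_0\mathbf{I}=\mathbf{G}(N_0\mathbf{I})\mathbf{G}^H$ the conjugation passes through the inverse, again yielding the equivariance. Note that $\mathbf{A}$ disappears entirely for ZF and remains sandwiched in the middle for MMSE, but in neither case does it obstruct the conjugation.

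The second and final ingredient is the rotational invariance of $\mathbf{\tilde{H}}$: because its entries are i.i.d.\ $\mathcal{CN}(0,1)$, left multiplication by a deterministic unitary matrix preserves its distribution, i.e.\ $\mathbf{G}\mathbf{\tilde{H}}\stackrel{d}{=}\mathbf{\tilde{H}}$. Combining the two ingredients, for any unitary $\mathbf{G}$ independent of $\mathbf{\tilde{H}}$ (in particular for deterministic $\mathbf{G}$, which suffices to verify the definition) we have $\mathbf{G}\mathbf{W}_k\mathbf{G}^H=\mathbf{G}f(\mathbf{\tilde{H}})\mathbf{G}^H=f(\mathbf{G}\mathbf{\tilde{H}})\stackrel{d}{=}f(\mathbf{\tilde{H}})=\mathbf{W}_k$, so that $\mathbf{W}_k$ is unitarily invariant and Proposition \ref{pro:1} applies. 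I expect the only delicate point to be the bookkeeping around the diagonal scaling $\mathbf{A}$: one must confirm that the per-column variances induced by the path losses and transmit powers do not destroy unitary invariance. The resolution is precisely that only the \emph{left}-unitary invariance of the isotropic factor $\mathbf{\tilde{H}}$ is needed, and the equivariance of $f$ absorbs $\mathbf{G}$ on both sides regardless of where $\mathbf{A}$ sits.
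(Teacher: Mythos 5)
Your proof is correct and follows the same overall strategy as the paper: establish that $\mathbf{W}_k$ is unitarily invariant for each of the three receivers and then invoke Proposition \ref{pro:1}, using the factorization $\mathbf{\hat{H}}_{(-k)}=\mathbf{\tilde{H}}\mathbf{A}$ to isolate the isotropic Gaussian factor. Where you differ is in the mechanism for verifying unitary invariance in the ZF and MMSE cases. The paper passes through the SVD $\mathbf{\tilde{H}}=\mathbf{\tilde{U}}\mathbf{\tilde{\Lambda}}\mathbf{\tilde{V}}^H$, writes $\mathbf{W}_k=\mathbf{\tilde{U}}\mathbf{M}\mathbf{\tilde{U}}^H$ for a suitable Hermitian $\mathbf{M}$ built from $\mathbf{\tilde{\Lambda}}$ and $\mathbf{\tilde{V}}$, and then appeals to the fact that $\mathbf{\tilde{U}}$ is isotropically distributed and independent of $(\mathbf{\tilde{\Lambda}},\mathbf{\tilde{V}})$ --- a nontrivial random-matrix fact cited from the literature. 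You instead observe that $\mathbf{W}_k=f(\mathbf{\tilde{H}})$ for a map satisfying $f(\mathbf{G}\mathbf{\tilde{H}})=\mathbf{G}f(\mathbf{\tilde{H}})\mathbf{G}^H$ and combine this with the elementary left-rotational invariance $\mathbf{G}\mathbf{\tilde{H}}\stackrel{d}{=}\mathbf{\tilde{H}}$ of an i.i.d.\ complex Gaussian matrix; both equivariance computations (cancellation of $\mathbf{A}$ for ZF, conjugation passing through the inverse for MMSE) check out. Your route is arguably cleaner, as it avoids invoking the independence of the SVD factors altogether; the paper's route has the minor advantage of exhibiting the eigendecomposition $\mathbf{S}\mathbf{D}\mathbf{S}^H$ used inside the proof of Proposition \ref{pro:1} explicitly. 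One small imprecision: the definition of unitary invariance quantifies over unitary $\mathbf{G}$ \emph{independent of} $\mathbf{W}_k$, not merely deterministic $\mathbf{G}$; your argument does cover this by conditioning on $\mathbf{G}$, but the parenthetical claim that the deterministic case ``suffices to verify the definition'' should be phrased as a conditioning step rather than asserted outright.
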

\begin{proof}
To prove Corollary \ref{cor:1},  all we need is to show that  $\mathbf{W}_k$ is unitarily invariant for MF, ZF, and MMSE receivers. As this is trivial in the MF case, where $\mathbf{W}_k=\mathbf{I}$, we will focus on the cases with ZF and MMSE receivers in the following.

For ZF receivers, substituting  $\mathbf{\hat{H}}_{(-k)} =\mathbf{\tilde{H}} \mathbf{A}$ to \eqref{eqn:ZF}, we have $\mathbf{W}_k=\mathbf{I}-\mathbf{\tilde{U}}\mathbf{\tilde{\Lambda}}(\mathbf{\tilde{\Lambda}}^H\mathbf{\tilde{\Lambda}})^{-1}\mathbf{\tilde{\Lambda}}^H\mathbf{\tilde{U}}^H=\mathbf{\tilde{U}}(\mathbf{I}-\mathbf{\tilde{\Lambda}}(\mathbf{\tilde{\Lambda}}^H\mathbf{\tilde{\Lambda}})^{-1}\mathbf{\tilde{\Lambda}}^H)\mathbf{\tilde{U}}^H$. Since $\mathbf{\tilde{U}}$ is an isotropically distributed unitary matrix, it has the same distribution as $\mathbf{G}\mathbf{\tilde{U}}$ for any unitary matrix $\mathbf{G}$ that is independent of $\mathbf{\tilde{U}}$. Therefore, $\mathbf{W}_k$ is unitarily invariant.

In the case of MMSE, \eqref{eqn:MMSE} can be written as $\mathbf{W}_k=\mathbf{\tilde{U}}(\mathbf{T}+N_0\mathbf{I})^{-1}\mathbf{\tilde{U}}^H$, where $\mathbf{T}=\mathbf{\tilde{\Lambda}}\mathbf{\tilde{V}}^H\mathbf{A}^2\mathbf{\tilde{V}}\mathbf{\tilde{\Lambda}}$ is a Hermitian matrix. Since $\mathbf{\tilde{U}}$ is an isotropically distributed unitary matrix, $\mathbf{W}_k$ is unitarily invariant.

\end{proof}

\section{Optimal Beamforming as Homogeneous QCQP}\label{section:3}
In this section, we show that the optimal SU beamforming problems that arise in the three scenarios discussed in Section \ref{sec:2} can all be formulated as quadratically constrained quadratic programming (QCQP) problems.  To begin, let us simplify Problem (\ref{eqn:generalform}) by exploiting the properties of its optimal receive beamforming solution $\mathbf{r}_S^*$.  Observe that the variable $\mathbf{r}_S$ appears only in the objective function of Problem \eqref{eqn:generalform}.  Thus, for any given $\mathbf{t}_S$, the optimal $\mathbf{r}_S$ that maximizes $\gamma_S$ is simply an MMSE receiver \cite{JD93} given by
\begin{equation}\label{eqn:rS}
\mathbf{r}_S^*(\mathbf{t}_S)=\beta_S\mathbf{\Phi}^{-1}\mathbf{H}_{S,S}\mathbf{t}_S,
\end{equation}
where $\mathbf{\Phi}=\sum_{k=1}^K\alpha_{S,k}\mathbf{H}_{S,k}\mathbf{t}_k\mathbf{t}_k^H\mathbf{H}_{S,k}^H+N_0\mathbf{I}$, and $\beta_S=\frac{1}{||\mathbf{\Phi}^{-1}\mathbf{H}_{S,S}\mathbf{t}_S||_2}$ is a normalization factor that ensures $||\mathbf{r}_S^*(\mathbf{t}_S)||_2^2=1$.

Upon substituting $\mathbf{r}_S^*(\mathbf{t}_S)$ into $\gamma_S$ (see (\ref{eq:SINR})), we obtain $\gamma_S = \alpha_{S,S}\mathbf{t}_S^H\mathbf{H}_{S,S}^H\mathbf{\Phi}^{-1}\mathbf{H}_{S,S}\mathbf{t}_S = \mathbf{t}_S^H\mathbf{A}\mathbf{t}_S$, where $\mathbf{A}=\alpha_{S,S}\mathbf{H}_{S,S}^H\mathbf{\Phi}^{-1}\mathbf{H}_{S,S}$.  In particular, we can eliminate the variable $\mathbf{r}_S$ from Problem (\ref{eqn:generalform}) and replace the objective function with the quadratic form $\mathbf{t}_S^H\mathbf{A}\mathbf{t}_S$.

\subsection{Homogeneous QCQP Formulation in Scenario 1}\label{subsection:3-2}

Now, recall that the secondary transmitter has perfect knowledge of $\mathbf{H}_{k,S}^H \mathbf{r}_k$ in Scenario 1. With the optimal $\mathbf{r}_S$ given in \eqref{eqn:rS}, Problem \eqref{eqn:generalform} becomes
\begin{subequations}\label{eqn:Scenario1}
\begin{eqnarray}
\max_{\mathbf{t}_S} && \mathbf{t}_S^H\mathbf{A}\mathbf{t}_S \label{eqn:bf-obj} \\
\text{s.t. } && \mathbf{t}_S^H\mathbf{Q}_k^1\mathbf{t}_S \leq 1 \qquad \forall k=1,\ldots,K, \label{constraint14b}\\
&& \mathbf{t}_S^H\mathbf{t}_S \leq P_{S,max}, \label{eqn:bf-power}
\end{eqnarray}
\end{subequations}
where
$
\mathbf{Q}_k^1=\frac{\alpha_{k,S}}{\epsilon_k} \mathbf{H}_{k,S}^H \mathbf{r}_k \mathbf{r}_k^H \mathbf{H}_{k,S}
$
are Hermitian positive semidefinite matrices. Problem \eqref{eqn:Scenario1} is a homogeneous QCQP, where both the objective function and inequality constraints are quadratic without linear terms.

In subsequent subsections, we will show that similar homogeneous QCQP problems can be formulated for Scenarios 2 and 3, with $\mathbf{Q}_k^1$ in \eqref{constraint14b} replaced by some suitable matrices $\mathbf{Q}_k^2$ and $\mathbf{Q}_k^3$, respectively.

\subsection{Homogeneous QCQP Formulation in Scenario 2}\label{subsection:3-3}

In Scenario 2, the realization of $\mathbf{r}_k$ is not known to the secondary transmitter.  In order to have a more efficient utilization of the spectrum, we can exploit the distribution of $\mathbf{r}_k$ and consider the probabilistic interference constraints
\begin{equation}\label{eqn:chance2}
\Pr_{\mathbf{r}_k}\bigg\{\big| \mathbf{r}_k^H \mathbf{H}_{k,S}\mathbf{t}_S\big|^2 \leq \frac{\epsilon_k}{\alpha_{k,S}}\bigg\}\geq 1-\delta_k \qquad \forall k=1,\ldots,K.
\end{equation}
To tackle the constraints in (\ref{eqn:chance2}), we need the following lemma:
\begin{lemma}\label{lem:scn2-F}
   Let $\mathbf{r}\in\mathbb{C}^n$ be a normalized complex Gaussian vector, i.e., $\mathbf{r}=\frac{\mathbf{z}}{\|\mathbf{z}\|_2}$, where $\mathbf{z}\sim \mathcal{CN}(\mathbf{0}, \mathbf{I})$ is a standard complex Gaussian vector.  Let $\mathbf{u}\in\mathbb{C}^n$ be an arbitrary vector, and let $\zeta>0,\delta\in(0,1)$ be arbitrary scalars.  Then, we have
   $$ \Pr_{\mathbf{r}}\big\{| \mathbf{r}^H \mathbf{u}|^2 \leq \zeta\big\} \ge 1-\delta \quad\Longleftrightarrow\quad  \|\mathbf{u}\|_2^2\leq \frac{\zeta}{1-\delta^{\frac{1}{n-1}}}. $$
\end{lemma}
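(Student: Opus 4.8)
The plan is to exploit the isotropy of $\mathbf{r}$ to collapse the problem to a one-dimensional computation, identify the exact distribution of the resulting scalar, and then rearrange the closed-form tail probability into the stated norm bound.

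First I would use the fact (recorded in the Definition) that a normalized complex Gaussian vector $\mathbf{r}=\mathbf{z}/\|\mathbf{z}\|_2$ is isotropically distributed, hence its law is invariant under any fixed unitary transformation. Consequently the distribution of $|\mathbf{r}^H\mathbf{u}|^2$ depends on $\mathbf{u}$ only through $\|\mathbf{u}\|_2$, so I may rotate $\mathbf{u}$ to $\|\mathbf{u}\|_2\,\mathbf{e}_1$, where $\mathbf{e}_1$ is the first standard basis vector. This reduces the quantity of interest to $|\mathbf{r}^H\mathbf{u}|^2 = \|\mathbf{u}\|_2^2\,|r_1|^2$, where $r_1=z_1/\|\mathbf{z}\|_2$ is the first coordinate of $\mathbf{r}$, and thus $t:=|r_1|^2 = |z_1|^2/\sum_{i=1}^n|z_i|^2$.

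Next I would pin down the distribution of $t$. The key facts are that the $|z_i|^2$ are i.i.d.\ unit-mean exponential random variables (each being the squared modulus of a standard complex Gaussian), so the numerator $|z_1|^2$ is $\mathrm{Gamma}(1,1)$ while the independent remainder $\sum_{i=2}^n|z_i|^2$ is $\mathrm{Gamma}(n-1,1)$. The ratio of the numerator to the total is therefore $\mathrm{Beta}(1,n-1)$, whose density is $(n-1)(1-t)^{n-2}$ on $[0,1]$; integrating yields the clean CDF $\Pr\{t\le s\}=1-(1-s)^{n-1}$ for $s\in[0,1]$. Writing $s=\zeta/\|\mathbf{u}\|_2^2$, the event $\{|\mathbf{r}^H\mathbf{u}|^2\le\zeta\}$ equals $\{t\le s\}$, so for $s\le 1$ the chance constraint $1-(1-s)^{n-1}\ge 1-\delta$ rearranges to $s\ge 1-\delta^{1/(n-1)}$, which is exactly $\|\mathbf{u}\|_2^2\le \zeta/(1-\delta^{1/(n-1)})$. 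I would then dispose of the degenerate regime $s>1$ (equivalently $\|\mathbf{u}\|_2^2<\zeta$, including $\mathbf{u}=\mathbf{0}$) separately: there the probability is identically $1$, and the right-hand bound also holds automatically since $1-\delta^{1/(n-1)}\le 1$, so the biconditional remains consistent across all admissible parameters.

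I expect the main obstacle to be the clean derivation of the distribution of $|r_1|^2$ — in particular justifying that the squared modulus of a standard complex Gaussian is unit-mean exponential and that the resulting ratio is $\mathrm{Beta}(1,n-1)$ (which tacitly requires $n\ge 2$ for the formula to be well defined). Once that closed-form CDF is secured, the isotropy reduction and the final algebraic rearrangement are routine, leaving only the boundary-case bookkeeping described above.
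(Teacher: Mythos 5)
Your proof is correct and follows essentially the same route as the paper's: both reduce $|\mathbf{r}^H\mathbf{u}|^2$ to $\|\mathbf{u}\|_2^2\,|z_1|^2/\|\mathbf{z}\|_2^2$ by rotational invariance and then invoke the exact closed-form CDF of that ratio of independent Gamma (chi-square) variables. The only difference is cosmetic --- you identify the ratio directly as a $\mathrm{Beta}(1,n-1)$ variable with CDF $1-(1-s)^{n-1}$, whereas the paper passes through the $F_{2(n-1),2}$ distribution and the regularized incomplete beta function to reach the same polynomial expression --- and your explicit treatment of the degenerate regime $\|\mathbf{u}\|_2^2\le\zeta$ is a small piece of bookkeeping the paper leaves implicit.
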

\noindent{\emph{Proof:}}  Since the distribution of $\mathbf{r}$ is rotationally invariant, we may assume without loss of generality that $\mathbf{u}=\big[\|\mathbf{u}\|_2,0,\ldots,0\big]^T$. Then, we have
\begin{eqnarray}
&& \Pr_{\mathbf{r}}\big\{\big| \mathbf{r}^H \mathbf{u}\big|^2 \leq \zeta\big\} = \Pr\big\{\|\mathbf{u}\|_2^2\cdot\big|[\mathbf{r}]_1\big|^2 \leq \zeta\big\}\nonumber\\
  &=& \Pr\left\{(\|\mathbf{u}\|_2^2-\zeta)\big|[\mathbf{z}]_1\big|^2 \leq \zeta \sum_{i=2}^n\big|[\mathbf{z}]_i\big|^2 \right\},\nonumber
\end{eqnarray}
where $[\cdot]_i$ denotes the $i^{th}$ entry of a vector. Note that $\sum_{i=2}^n\big|[\mathbf{z}]_i\big|^2$ has the same distribution as $\frac{1}{2}\chi_{2(n-1)}^2$, where $\chi_d^2$ is the standard real chi-square random variable with $d$ degrees of freedom.  Moreover, it is independent of $\big|[\mathbf{z}]_1\big|^2$, which has the same distribution as $\frac{1}{2}\chi_2^2$.  Hence, we have
\begin{eqnarray*}
&&\Pr_{\mathbf{r}}\left\{\big| \mathbf{r}^H \mathbf{u}\big|^2 \leq \zeta\right\} = \Pr\left\{(\|\mathbf{u}\|_2^2-\zeta)\chi_2^2 \leq \zeta \chi_{2(n-1)}^2 \right\} \\
& =& \Pr\left\{ \frac{\chi_{2(n-1)}^2/2(n-1)}{\chi_2^2/2} \geq \frac{\|\mathbf{u}\|_2^2-\zeta}{\zeta(n-1)} \right\}. 
 \end{eqnarray*}
Now, let
\begin{equation}\label{eqn:incomplete}
I_\alpha(a,b)=\sum_{j=a}^{a+b-1}\frac{(a+b-1)!}{j!(a+b-1-j)!}\alpha^j(1-\alpha)^{a+b-1-j}
\end{equation}
be the regularized incomplete beta function.  It is known that the random variable
$$ F_{2(n-1),2} = \frac{\chi_{2(n-1)}^2/2(n-1)}{\chi_2^2/2} $$
follows the so-called $F$-distribution with $(2(n-1),2)$ degrees of freedom, whose cumulative distribution function (CDF) is given by $\Pr\{F_{2(n-1),2}\leq x\}=I_{\frac{2(n-1)x}{2(n-1)x+2}}(n-1,1)$.  Upon working out the summation in \eqref{eqn:incomplete}, we obtain
$$ \Pr\{F_{2(n-1),2}\leq x\}=\left(\frac{(n-1)x}{(n-1)x+1}\right)^{n-1}. $$
Thus, we conclude that
\begin{eqnarray*}
&&\Pr_{\mathbf{r}}\left\{\big| \mathbf{r}^H \mathbf{u}\big|^2 \leq \zeta\right\}\geq 1-\delta\\
\Longleftrightarrow &&\Pr\left\{F_{2(n-1),2}\leq\frac{\|\mathbf{u}\|_2^2-\zeta}{\zeta(n-1)}\right\}\leq \delta \\
\Longleftrightarrow && \left(\frac{\|\mathbf{u}\|_2^2/\zeta-1}{\|\mathbf{u}\|_2^2/\zeta}\right)^{n-1} \leq \delta\\
\Longleftrightarrow && \|\mathbf{u}\|_2^2\leq \frac{\zeta}{1-\delta^{\frac{1}{n-1}}},
\end{eqnarray*}
which completes the proof.
$\hfill\blacksquare$

By Lemma \ref{lem:scn2-F}, we can rewrite the chance constraints (\ref{eqn:chance2}) as
$$
   \|\mathbf{H}_{k,S}\mathbf{t}_S\|_2^2\leq \frac{1}{1-\delta_k^{\frac{1}{N_k-1}}}\frac{\epsilon_k}{\alpha_{k,S}} \qquad\forall k=1,\ldots, K,
$$
which of course are equivalent to
\begin{equation} \label{eq:scn2-qc}
   \mathbf{t}_S^H\mathbf{Q}_k^2\mathbf{t}_S \leq 1 \qquad\forall k=1,\ldots,K, \end{equation}
where
$$ \mathbf{Q}_k^2=\frac{1-\delta_k^{1/(N_k-1)}}{\epsilon_k}\alpha_{k,S}\mathbf{H}_{k,S}^H\mathbf{H}_{k,S}. $$
In particular, the optimal beamforming vector $\mathbf{t}_S$ in Scenario 2 can be found by solving the QCQP problem (\ref{eqn:bf-obj}), (\ref{eq:scn2-qc}) and (\ref{eqn:bf-power}).

As mentioned earlier, when $\delta_k=0$, the chance constraints \eqref{eqn:chance2} are equivalent to the worst-case constraints \eqref{eqn:worstcase}. In this case, we have $\mathbf{Q}_k^2=\frac{\alpha_{k,S}}{\epsilon_k}\mathbf{H}_{k,S}^H\mathbf{H}_{k,S}$.

\subsection{Homogeneous QCQP Formulation and Closed-Form Solution in Scenario 3}\label{subsection:3-4}

\subsubsection{\textbf{Homogeneous QCQP}}

In Scenario 3, both $\mathbf{r}_k$ and $\mathbf{H}_{k,S}$ are unknown to the secondary transmitter. Similar to Scenario 2, we consider the following probabilistic interference constraints:
\begin{equation}\label{eqn:chance3}
   \Pr_{\mathbf{r}_k, \mathbf{H}_{k,S}} \left\{\big| \mathbf{r}_k^H \mathbf{H}_{k,S}\mathbf{t}_S\big|^2 \leq \frac{\epsilon_k}{\alpha_{k,S}} \right\} \geq 1-\delta_k \qquad\forall k=1,\ldots,K.
\end{equation}
Since $\mathbf{r}_k$ and $\mathbf{H}_{k,S}$ are independent, upon conditioning on $\mathbf{r}_k$, we see that $\mathbf{r}_k^H \mathbf{H}_{k,S}\mathbf{t}_S$ is a complex Gaussian random variable with mean 0 and variance $\|\mathbf{t}_S\|_2^2$. Note that the conditional distribution of $\mathbf{r}_k^H \mathbf{H}_{k,S}\mathbf{t}_S$ is independent of $\mathbf{r}_k$. It follows that unconditionally, we have $\mathbf{r}_k^H \mathbf{H}_{k,S}\mathbf{t}_S\sim\mathcal{CN}(0,\|\mathbf{t}_S\|_2^2)$.  In particular, the random variable $|\mathbf{r}_k^H \mathbf{H}_{k,S}\mathbf{t}_S|^2$ follows an exponential distribution with parameter $\frac{1}{\|\mathbf{t}_S\|_2^2}$.  Note that this result holds as long as $\mathbf{r}_k$ is a unit-length vector, regardless of its distribution.

Since
$$ \Pr_{\mathbf{r}_k, \mathbf{H}_{k,S}} \left\{\big| \mathbf{r}_k^H \mathbf{H}_{k,S}\mathbf{t}_S\big|^2 \leq \frac{\epsilon_k}{\alpha_{k,S}}\right\} = 1-\exp\left(-\frac{\epsilon_k}{\alpha_{k,S}\|\mathbf{t}_S\|_2^2}\right), $$
it follows that the chance constraints in \eqref{eqn:chance3} can be written as
\begin{equation} \label{eq:scn3-tnorm}
   \|\mathbf{t}_S\|_2^2\leq \frac{\epsilon_k}{\alpha_{k,S}\log\frac{1}{\delta_k}} \qquad\forall k=1,\ldots,K,
\end{equation}
or equivalently,
\begin{equation} \label{eq:scn3-qc}
   \mathbf{t}_S^H\mathbf{Q}_k^3\mathbf{t}_S \le 1 \qquad\forall k=1,\ldots,K,
\end{equation}
where $\mathbf{Q}_k^3=\frac{\alpha_{k,S}}{\epsilon_k}\log\frac{1}{\delta_k}\mathbf{I}$.  Thus, the optimal beamforming vector $\mathbf{t}_S$ in Scenario 3 can be found by solving the QCQP problem (\ref{eqn:bf-obj}), (\ref{eq:scn3-qc}) and (\ref{eqn:bf-power}).

We would like to emphasize that if the worst-case interference temperature constraints \eqref{eqn:worstcase} are to be satisfied in Scenario 3, i.e., $\delta_k=0$, then the only feasible solution is $\mathbf{t}_S=\mathbf{0}$. That is, the secondary transmitter can never transmit under the overly stringent constraints. On the other hand, by allowing a small outage probability $\delta_k$, the secondary link can transmit on the same spectrum of the primary system.

\subsubsection{\textbf{Closed-Form Solution}}

Here, we show that the optimal $\mathbf{t}_S$ in Scenario 3 can be found very efficiently by a simple eigenvalue-eigenvector computation.  Indeed, observe that the constraints (\ref{eq:scn3-tnorm}) and (\ref{eqn:bf-power}) can be combined to yield the single constraint
$$
   \|\mathbf{t}_S\|_2^2 \leq \lambda \equiv \min\left\{\frac{\epsilon_1}{\alpha_{1,S}\log\frac{1}{\delta_1}}, \ldots, \frac{\epsilon_K}{\alpha_{K,S}\log\frac{1}{\delta_K}}, P_{S,max}\right\}.
$$
Thus, the optimal beamforming problem in Scenario 3, which is given by
\begin{equation}\label{eqn:eigenvalueproblem}
   \begin{array}{c@{\quad}l}
      \displaystyle{\max_{\mathbf{t}_S}} & \mathbf{t}_S^H\mathbf{A}\mathbf{t}_S \\
\text{s.t.} & \|\mathbf{t}_S\|_2^2 \leq \lambda,
   \end{array}
\end{equation}
is reduced to the problem of finding the largest eigenvalue of $\mathbf{A}$ and its associated eigenvector.  Specifically, let $\mathbf{v}$ be the eigenvector of $\mathbf{A}$ corresponding to the largest eigenvalue.  Then, the optimal solution to \eqref{eqn:eigenvalueproblem} is simply $\mathbf{t}_S^*=\sqrt{\lambda}\mathbf{v}$.  Note that there is no need to solve any QCQP in this case.

\section{SDP Relaxation}\label{section:4}
We have shown in the last section that the optimal beamforming solution can be efficiently obtained by a simple eigenvalue-eigenvector computation in Scenario 3. However, to obtain the optimal solutions for Scenarios 1 and 2, homogeneous QCQP problems of the following form have to be solved:
\begin{subequations}\label{eqn:QCQP}
\begin{eqnarray}
\max_{\mathbf{t}_S} && \mathbf{t}_S^H\mathbf{A}\mathbf{t}_S \\
\text{s.t. } && \mathbf{t}_S^H\mathbf{Q}_k\mathbf{t}_S \leq 1 \qquad\forall k=1,\ldots,K, \label{sdp-constraint14b}\\
&&\mathbf{t}_S^H\mathbf{t}_S \leq P_{S,max}\label{sdp-constraint14c}.
\end{eqnarray}
\end{subequations}
Here, $\mathbf{Q}_k$ is equal to $\mathbf{Q}_k^1$ and $\mathbf{Q}_k^2$ in Scenarios 1 and 2, respectively.  Unfortunately, since Problem (\ref{eqn:QCQP}) involves \emph{maximizing} a convex function over an intersection of $K+1$ ellipsoids, it is NP-hard in general \cite{NRT99}.  In this section, we show how Problem (\ref{eqn:QCQP}) can be tackled using semidefinite programming (SDP) relaxation methods.

To begin, observe that $\mathbf{t}_S^H\mathbf{Q}\mathbf{t}_S = \mathrm{tr}(\mathbf{Q}\mathbf{X})$ for any matrix $\mathbf{Q}$, where $\mathbf{X}=\mathbf{t}_S\mathbf{t}_S^H$ is a rank one Hermitian positive semidefinite matrix.  Thus, by relaxing the rank constraint $\mathrm{rank}(\mathbf{X})=1$, we obtain the following SDP relaxation of Problem (\ref{eqn:QCQP}):
\begin{subequations}\label{eqn:SDP1}
\begin{eqnarray}
\max_{\mathbf{X}\succeq\mathbf{0}}&& \mathrm{tr}(\mathbf{A}\mathbf{X})\\
\text{s.t. }&&\mathrm{tr}(\mathbf{Q}_k\mathbf{X}) \leq 1 \qquad\forall k=1,\ldots,K, \label{eqn:SDP-constraint1}\\
&&\mathrm{tr}(\mathbf{X}) \leq P_{S,max} \label{eqn:SDP-constraint2}.
\end{eqnarray}
\end{subequations}
The dual of (\ref{eqn:SDP1}) is given by
\begin{subequations}\label{eqn:SDP-dual}
\begin{eqnarray}
\min_{y_1,\ldots,y_K,y_{K+1}}&& \sum_{k=1}^{K}y_k+P_{S,max}y_{K+1}, \\
\text{s.t. } && \sum_{k=1}^{K} y_k\mathbf{Q}_k+y_{K+1}\mathbf{I}-\mathbf{A}\succeq\mathbf{0}, \label{eqn:SDP-dual-constraint1}\\
&& y_k\geq0 \qquad\forall k=1,\ldots,K+1.
\end{eqnarray}
\end{subequations}
It is known that SDP problems are convex and can be solved in polynomial time using standard interior-point methods \cite{VB96}.  Moreover, if we can find a rank-one optimal solution to (\ref{eqn:SDP1}), then we can extract from it an optimal solution to the original QCQP problem (\ref{eqn:QCQP}).  In this case, there is no gap between the optimal value of (\ref{eqn:QCQP}) and that of (\ref{eqn:SDP1}), and Problems (\ref{eqn:QCQP}) and (\ref{eqn:SDP1}) are equivalent.  Of course, the SDP relaxation \eqref{eqn:SDP1} is in general not equivalent to the QCQP problem (\ref{eqn:QCQP}), as we have discarded the rank constraint $\mathrm{rank}(\mathbf{X})=1$.  In the next two sections, we will discuss how to recover a rank-one solution from an optimal solution to \eqref{eqn:SDP1}.  Before proceeding, however, let us introduce the following lemma, which will be useful for our later discussions.

\begin{lemma}\label{lem:2}
Both \eqref{eqn:SDP1} and its dual \eqref{eqn:SDP-dual} satisfy the Slater condition, i.e., they are strictly feasible.
\end{lemma}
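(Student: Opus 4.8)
The plan is to verify the Slater condition directly by exhibiting an explicit strictly feasible point for each of the two problems. Recall that for a conic program of this form, strict feasibility means producing a point that lies in the \emph{interior} of the cone (a positive definite matrix for the primal, and a vector for which the linear-matrix-inequality slack is positive definite for the dual) and that satisfies every affine inequality with \emph{strict} inequality. Since both constructions are explicit, the proof reduces to a routine verification; the only fact I would want to record at the outset is that $\mathbf{A}=\alpha_{S,S}\mathbf{H}_{S,S}^H\mathbf{\Phi}^{-1}\mathbf{H}_{S,S}$ is Hermitian positive semidefinite (as $\mathbf{\Phi}\succ\mathbf{0}$ implies $\mathbf{\Phi}^{-1}\succ\mathbf{0}$), so that its largest eigenvalue $\lambda_{\mathrm{max}}(\mathbf{A})$ is a well-defined nonnegative real number.

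For the primal \eqref{eqn:SDP1}, I would take $\mathbf{X}=\tau\mathbf{I}$ for a sufficiently small scalar $\tau>0$. Such an $\mathbf{X}$ is automatically positive definite and hence lies in the interior of the PSD cone. Since each $\mathbf{Q}_k$ is Hermitian positive semidefinite, we have $\mathrm{tr}(\mathbf{Q}_k\mathbf{X})=\tau\,\mathrm{tr}(\mathbf{Q}_k)\geq0$ and $\mathrm{tr}(\mathbf{X})=\tau M_S$. Choosing
$$ 0<\tau<\min\left\{\frac{1}{1+\max_{1\le k\le K}\mathrm{tr}(\mathbf{Q}_k)},\ \frac{P_{S,max}}{M_S}\right\} $$
makes $\mathrm{tr}(\mathbf{Q}_k\mathbf{X})<1$ for every $k$ and $\mathrm{tr}(\mathbf{X})<P_{S,max}$ simultaneously, so all constraints hold strictly. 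This establishes primal strict feasibility.

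For the dual \eqref{eqn:SDP-dual}, I would set $y_1=\cdots=y_K=1$ and $y_{K+1}=\lambda_{\mathrm{max}}(\mathbf{A})+1$. Then every $y_k>0$, so the nonnegativity constraints hold strictly. For the matrix inequality, note that $\sum_{k=1}^K y_k\mathbf{Q}_k\succeq\mathbf{0}$, being a nonnegative combination of PSD matrices, while $y_{K+1}\mathbf{I}-\mathbf{A}=(\lambda_{\mathrm{max}}(\mathbf{A})+1)\mathbf{I}-\mathbf{A}\succeq\mathbf{I}\succ\mathbf{0}$. Their sum is therefore positive definite, so the slack of \eqref{eqn:SDP-dual-constraint1} is strictly positive definite, establishing dual strict feasibility.

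There is no genuine obstacle here: both points are constructed by hand. The only place requiring a little care is that the constraint matrices $\mathbf{Q}_k$ are merely positive \emph{semi}definite rather than positive definite. This is precisely why the scaled identity suffices on the primal side, where only the weak bound $\mathrm{tr}(\mathbf{Q}_k\mathbf{X})\geq0$ is needed, and why adding a strictly positive multiple of $\mathbf{I}$ dominates the possibly singular term $\sum_k y_k\mathbf{Q}_k$ on the dual side.
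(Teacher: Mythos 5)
Your proof is correct and complete: the scaled identity $\tau\mathbf{I}$ with your choice of $\tau$ is a strictly feasible primal point (the ``$1+$'' in the denominator correctly guards against $\max_k\mathrm{tr}(\mathbf{Q}_k)=0$), and the dual point $y_1=\cdots=y_K=1$, $y_{K+1}=\lambda_{\mathrm{max}}(\mathbf{A})+1$ makes the slack in \eqref{eqn:SDP-dual-constraint1} dominate $\mathbf{I}$, using only that each $\mathbf{Q}_k\succeq\mathbf{0}$ and $\mathbf{A}\succeq\mathbf{0}$ (which holds since $\mathbf{\Phi}\succ\mathbf{0}$ because $N_0>0$). The paper itself omits the proof of Lemma \ref{lem:2} ``due to page limit,'' so there is nothing to compare against; your construction is the standard one that the authors almost certainly had in mind.
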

%
We omit the proof here due to page limit.
\begin{remark}
Since both \eqref{eqn:SDP1} and \eqref{eqn:SDP-dual} are strictly feasible, a pair of primal and dual feasible solutions $(\mathbf{X}^*;(y_1^*, \ldots,y_{K+1}^*))$ to \eqref{eqn:SDP1} and \eqref{eqn:SDP-dual} is optimal if and only if the following complementary conditions hold:
\begin{subequations}\label{eqn:comp}
\begin{eqnarray}
&&\mathrm{tr}\left(\mathbf{X}^*\left(\sum_{k=1}^{K} y_k^*\mathbf{Q}_k+y_{K+1}\mathbf{I}-\mathbf{A}\right)\right)=0,\label{eqn:comp1}\\
&&y_k^*\left(\mathrm{tr}(\mathbf{Q}_k\mathbf{X^*})-1\right)=0 \qquad\forall k=1,\ldots,K, \label{eqn:comp2} \\
&&y_{K+1}^*\left(\mathrm{tr}(\mathbf{X^*})-P_{S,max}\right)=0.
\end{eqnarray}
\end{subequations}
\end{remark}

\section{Optimal Rank-One Solution when $K\leq2$}\label{sec:exact}
As it turns out, when there are no more than two primary links (i.e., when $K\le2$), there is no gap between the optimal value of the SDP relaxation \eqref{eqn:SDP1} and that of the original QCQP problem \eqref{eqn:QCQP}.  Moreover, a rank-one optimal solution to (\ref{eqn:SDP1}), and hence an optimal solution to (\ref{eqn:QCQP}), can be found in polynomial time.  Specifically, we have the following proposition, which follows directly from the results of Huang et al.~\cite{HZ07,HdMZ10}:
\begin{proposition}\label{pro:nogap}
The homogeneous QCQP problem \eqref{eqn:QCQP} can be solved exactly in polynomial time when the number of primary links $K$ is at most 2. In particular, an optimal solution to (\ref{eqn:QCQP}) can be constructed from an optimal solution of \eqref{eqn:SDP1} in polynomial time.
\end{proposition}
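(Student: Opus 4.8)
The plan is to route everything through the SDP relaxation: solve (\ref{eqn:SDP1}) in polynomial time, and then convert \emph{any} optimal SDP solution into a rank-one optimal solution by invoking the complex matrix rank-one decomposition of Huang et al.~\cite{HZ07,HdMZ10}. First I would solve (\ref{eqn:SDP1}) to obtain an optimal $\mathbf{X}^\star\succeq\mathbf{0}$; by Lemma \ref{lem:2} strong duality holds and interior-point methods return such an $\mathbf{X}^\star$ in polynomial time \cite{VB96}. Let $r=\mathrm{rank}(\mathbf{X}^\star)$. If $r=1$ we are immediately done: writing $\mathbf{X}^\star=\mathbf{t}_S\mathbf{t}_S^H$ recovers a vector that is feasible for (\ref{eqn:QCQP}) and attains the SDP value, and since the SDP value upper-bounds the QCQP value, this $\mathbf{t}_S$ is optimal.

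The crux is the case $r>1$, which is where the results of Huang et al.\ enter. Because we work over $\mathbb{C}$, their decomposition theorem says that a Hermitian PSD matrix of rank $r$ can be written, in polynomial time, as $\mathbf{X}^\star=\sum_{i=1}^r\mathbf{x}_i\mathbf{x}_i^H$ in such a way that \emph{each} rank-one piece carries exactly its proportional share of the trace against \emph{three} prescribed Hermitian matrices. When $K\le2$ the constraint matrices are $\mathbf{Q}_1,\mathbf{Q}_2,\mathbf{I}$ (or fewer), i.e.\ at most three Hermitian matrices, so I would apply the decomposition with respect to exactly these, obtaining $\mathbf{x}_i^H\mathbf{Q}_k\mathbf{x}_i=\frac{1}{r}\mathrm{tr}(\mathbf{Q}_k\mathbf{X}^\star)$ and $\mathbf{x}_i^H\mathbf{x}_i=\frac{1}{r}\mathrm{tr}(\mathbf{X}^\star)$ for every $i$ and every $k$.

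I would then set $\hat{\mathbf{t}}=\sqrt{r}\,\mathbf{x}_{i^\star}$, where $i^\star$ maximizes $\mathbf{x}_i^H\mathbf{A}\mathbf{x}_i$ over $i=1,\ldots,r$. The scaling by $\sqrt{r}$ makes the per-piece trace identities reproduce the full traces, so $\hat{\mathbf{t}}^H\mathbf{Q}_k\hat{\mathbf{t}}=\mathrm{tr}(\mathbf{Q}_k\mathbf{X}^\star)\le1$ and $\hat{\mathbf{t}}^H\hat{\mathbf{t}}=\mathrm{tr}(\mathbf{X}^\star)\le P_{S,max}$, establishing feasibility for (\ref{eqn:QCQP}). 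For the objective, a pigeonhole/averaging argument gives $\mathbf{x}_{i^\star}^H\mathbf{A}\mathbf{x}_{i^\star}\ge\frac{1}{r}\sum_{i=1}^r\mathbf{x}_i^H\mathbf{A}\mathbf{x}_i=\frac{1}{r}\mathrm{tr}(\mathbf{A}\mathbf{X}^\star)$, hence $\hat{\mathbf{t}}^H\mathbf{A}\hat{\mathbf{t}}\ge\mathrm{tr}(\mathbf{A}\mathbf{X}^\star)$, the SDP optimal value. Since the SDP value upper-bounds the QCQP value, $\hat{\mathbf{t}}$ is in fact optimal and there is no relaxation gap; all steps (SDP solve, decomposition, component selection) run in polynomial time.

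The main obstacle is the rank-one decomposition step itself, which is the genuinely deep content supplied by \cite{HZ07,HdMZ10}; the essential reason it succeeds here is that the complex Hermitian setting affords one extra degree of freedom over the real symmetric case, permitting three trace conditions---precisely $\mathbf{Q}_1,\mathbf{Q}_2,\mathbf{I}$---to be preserved simultaneously while collapsing to rank one. The only bookkeeping I must verify by hand is that $K\le2$ really yields at most three constraint matrices (so the theorem applies), and that the objective $\mathbf{A}$ is handled by the averaging argument rather than by the decomposition: the decomposition need not preserve the trace against $\mathbf{A}$, which is why I select the best component instead of demanding a fourth trace identity. This also explains the threshold: for $K\ge3$ four or more matrices would be required, the decomposition can fail, and the problem is NP-hard in general, which is treated by the randomized scheme of the next section.
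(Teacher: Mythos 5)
Your overall architecture (solve the SDP, rank-one decompose, pick and rescale a piece, certify optimality against the SDP upper bound) matches the paper's, but there is a genuine gap at the decomposition step. The tool the paper actually has---Theorem \ref{thm:decompose} of Huang et al.---gives equal splitting of the trace against only \emph{two} prescribed Hermitian matrices, for a PSD matrix of arbitrary rank. The three-matrix equal-splitting decomposition you invoke is not unconditional: the known extension (Ai--Huang--Zhang) requires $\mathrm{rank}(\mathbf{X}^\star)\geq 3$, and for $\mathrm{rank}(\mathbf{X}^\star)=2$ it can fail outright. A parameter count makes this plausible: the rank-one decompositions of a rank-2 Hermitian PSD matrix into two pieces are parametrized by $U(2)$ modulo the two irrelevant phases, leaving two real degrees of freedom against three independent real equal-splitting conditions. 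Since an interior-point solver generically returns a maximal-rank point of the optimal face, you cannot assume the rank avoids this bad case, so your proof as written does not go through for $K=2$. (It would go through if you first invoked Fact \ref{fact:rank} to reduce to a rank-one optimal solution, but then the decomposition argument would be moot; in any case that is not the route you or the paper take.)

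The paper closes exactly this gap with a case analysis that never needs more than the two-matrix version of Theorem \ref{thm:decompose}. If some constraint, say $\mathrm{tr}(\mathbf{Q}_1\mathbf{X}^*)<1$, is slack at optimality, then $y_1^*=0$ by complementary slackness \eqref{eqn:comp2}; the decomposition is performed with respect to $\mathbf{Q}_2$ and $\mathbf{I}$ only, and a pigeonhole argument selects a piece whose $\mathbf{Q}_1$-share is at most average, so the rescaled piece remains feasible for the dropped constraint. If all three constraints are binding, the decomposition is applied to the \emph{difference} matrices $\mathbf{Q}_1-\mathbf{Q}_2$ and $\mathbf{Q}_2-\tfrac{1}{P_{S,max}}\mathbf{I}$, whose traces against $\mathbf{X}^*$ vanish, so every piece has equal quadratic forms against $\mathbf{Q}_1$, $\mathbf{Q}_2$ and $\tfrac{1}{P_{S,max}}\mathbf{I}$, and a single rescaling makes all three constraints tight simultaneously. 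Optimality is then certified via the complementarity conditions \eqref{eqn:comp} together with dual feasibility, rather than by your averaging argument on the objective (which is fine in itself, but only once every piece is known to be feasible---precisely what the unavailable three-matrix decomposition was supposed to deliver). Your observation that $K\leq 2$ means at most three constraint matrices is correct, but the count alone does not justify the decomposition you need; the paper's reduction to two matrices per case is the missing idea.
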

Proposition \ref{pro:nogap} establishes the existence of polynomial-time algorithms for constructing optimal solutions to \eqref{eqn:QCQP} when $K\leq 2$.  In subsections \ref{subsection:4-1} and \ref{subsection:4-2}, we describe two such algorithms---one for the case where $K=1$, and the other for the case where $K=2$.  Both of them are based on the following decomposition theorem of Huang et al.~\cite{HZ07,HdMZ10}:
\begin{theorem}\label{thm:decompose}
Suppose that $\mathbf{Z}$ is a Hermitian positive semidefinite matrix of rank R, and $\mathbf{A}$ and $\mathbf{B}$ are two given Hermitian matrices. Then, there is a rank-one decomposition of $\mathbf{Z}$, namely, $\mathbf{Z}=\sum_{r=1}^R \mathbf{z}_r\mathbf{z}_r^H$, such that $\mathbf{z}_r^H\mathbf{A}\mathbf{z}_r=\frac{\mathrm{tr}(\mathbf{AZ})}{R}$ and $\mathbf{z}_r^H\mathbf{B}\mathbf{z}_r=\frac{\mathrm{tr}(\mathbf{BZ})}{R}$ for all $r=1,2,\ldots,R$.  Moreover, such a decomposition can be found in polynomial time.
\end{theorem}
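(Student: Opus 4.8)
The plan is to prove the theorem by induction on the rank $R$, with the inductive step resting on a single ``peeling'' lemma: given a Hermitian positive semidefinite $\mathbf{Z}$ of rank $R$ and the two Hermitian matrices $\mathbf{A},\mathbf{B}$, one can extract a single rank-one term $\mathbf{z}_1\mathbf{z}_1^H\preceq\mathbf{Z}$ that simultaneously hits both averages, i.e. $\mathbf{z}_1^H\mathbf{A}\mathbf{z}_1=\mathrm{tr}(\mathbf{AZ})/R$ and $\mathbf{z}_1^H\mathbf{B}\mathbf{z}_1=\mathrm{tr}(\mathbf{BZ})/R$. Once this is available, the residual $\mathbf{Z}'=\mathbf{Z}-\mathbf{z}_1\mathbf{z}_1^H$ is Hermitian positive semidefinite of rank exactly $R-1$, and a direct computation gives $\mathrm{tr}(\mathbf{AZ}')/(R-1)=\mathrm{tr}(\mathbf{AZ})/R$ and likewise for $\mathbf{B}$; hence the target averages are unchanged, and the induction hypothesis applied to $\mathbf{Z}'$ yields the remaining $R-1$ terms. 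The base case $R=1$ is immediate.

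To set up the peeling lemma I would factor $\mathbf{Z}=\mathbf{V}\mathbf{V}^H$ with $\mathbf{V}=[\mathbf{p}_1,\ldots,\mathbf{p}_R]\in\mathbb{C}^{n\times R}$ of full column rank (e.g. from the eigendecomposition). Parametrizing candidate vectors as $\mathbf{z}=\mathbf{V}\mathbf{u}$ for a unit vector $\mathbf{u}\in\mathbb{C}^R$ guarantees $\mathbf{z}\mathbf{z}^H=\mathbf{V}\mathbf{u}\mathbf{u}^H\mathbf{V}^H\preceq\mathbf{V}\mathbf{V}^H=\mathbf{Z}$ and reduces the two quadratic forms to $\mathbf{u}^H\tilde{\mathbf{A}}\mathbf{u}$ and $\mathbf{u}^H\tilde{\mathbf{B}}\mathbf{u}$, where $\tilde{\mathbf{A}}=\mathbf{V}^H\mathbf{A}\mathbf{V}$ and $\tilde{\mathbf{B}}=\mathbf{V}^H\mathbf{B}\mathbf{V}$ are $R\times R$ Hermitian with $\mathrm{tr}(\tilde{\mathbf{A}})=\mathrm{tr}(\mathbf{AZ})$ and $\mathrm{tr}(\tilde{\mathbf{B}})=\mathrm{tr}(\mathbf{BZ})$. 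The task becomes: find a unit $\mathbf{u}$ whose joint value $(\mathbf{u}^H\tilde{\mathbf{A}}\mathbf{u},\mathbf{u}^H\tilde{\mathbf{B}}\mathbf{u})$ equals the centroid $(\mathrm{tr}(\tilde{\mathbf{A}})/R,\mathrm{tr}(\tilde{\mathbf{B}})/R)$.

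The main obstacle --- and the reason the complex field is essential here --- is to show this centroid is attainable, which I would resolve through the convexity of the joint numerical range $W=\{(\mathbf{u}^H\tilde{\mathbf{A}}\mathbf{u},\mathbf{u}^H\tilde{\mathbf{B}}\mathbf{u}):\|\mathbf{u}\|_2=1\}\subset\mathbb{R}^2$. Writing $\mathbf{u}^H(\tilde{\mathbf{A}}+i\tilde{\mathbf{B}})\mathbf{u}=\mathbf{u}^H\tilde{\mathbf{A}}\mathbf{u}+i\,\mathbf{u}^H\tilde{\mathbf{B}}\mathbf{u}$, this set is exactly the classical numerical range of $\tilde{\mathbf{A}}+i\tilde{\mathbf{B}}$, which is convex by the Toeplitz--Hausdorff theorem. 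Since the standard basis vectors give $R$ points $(\tilde{\mathbf{A}}_{ii},\tilde{\mathbf{B}}_{ii})\in W$ whose average is precisely the target centroid, convexity of $W$ guarantees a unit vector $\mathbf{u}^*$ realizing it; then $\mathbf{z}_1=\mathbf{V}\mathbf{u}^*$ completes the peeling step, and since $\mathbf{Z}'=\mathbf{V}(\mathbf{I}-\mathbf{u}^*\mathbf{u}^{*H})\mathbf{V}^H$ with $\mathbf{I}-\mathbf{u}^*\mathbf{u}^{*H}$ a rank-$(R-1)$ projection, the residual indeed has rank $R-1$. (It is worth noting that over $\mathbb{R}$ the analogous joint range of two symmetric matrices need not be convex, which is exactly why the real case cannot handle two matrices at once.)

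Finally, for the polynomial-time claim I would make the convexity argument constructive. The centroid is reached by repeated pairwise averaging: letting $s_k$ denote the average of the first $k$ of these basis points, each $s_k$ lies on the segment joining the already-realized point $s_{k-1}$ and the new point, and within the two-dimensional subspace spanned by their two realizing vectors the restricted joint range is a numerical range of $2\times2$ Hermitian matrices --- an explicit ellipse --- on which a point of a given segment can be hit by solving in closed form for a rotation angle and a phase. Iterating $R-1$ such $2\times2$ steps produces $\mathbf{u}^*$ in polynomial time, and the eigendecomposition producing $\mathbf{V}$ is polynomial as well; assembling the $R$ peeling steps then yields the full decomposition in polynomial time.
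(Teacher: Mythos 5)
Your argument is correct, but it takes a genuinely different route from the paper, which in fact does not prove Theorem~\ref{thm:decompose} at all: it defers the proof to Huang et al.\ \cite{HZ07,HdMZ10} and instead reproduces their constructive procedure as Algorithm~\ref{alg:1}. That procedure starts from an \emph{arbitrary} rank-one decomposition $\mathbf{Z}=\sum_r\mathbf{p}_r\mathbf{p}_r^H$ and performs two sweeps of pairwise corrections: in the first sweep, whenever two terms have $\mathbf{A}$-values on opposite sides of $\mathrm{tr}(\mathbf{A}\mathbf{Z})/R$, a real rotation parameter $\gamma$ is solved for so that one combined term hits the average exactly, until all $R$ terms share the $\mathbf{A}$-average; in the second sweep the same is done for $\mathbf{B}$, but with a complex weight $w=\gamma e^{i(\alpha_1+\pi/2)}$ whose phase is chosen to kill the cross term in $\mathbf{A}$, so the already-equalized $\mathbf{A}$-values are preserved. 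Your proof instead peels off one vector at a time, reducing the simultaneous two-form condition to the attainability of the centroid of the joint numerical range of $(\tilde{\mathbf{A}},\tilde{\mathbf{B}})$, which you get from the Toeplitz--Hausdorff theorem, and then recursing on the rank-$(R-1)$ residual after checking that the target averages are invariant under peeling. Both arguments are sound and both localize the use of the complex field in the same place (your remark that the joint range of two real symmetric forms need not be convex is exactly the obstruction that forces the phase trick in Algorithm~\ref{alg:1}). What your route buys is a conceptually cleaner existence proof and a clear explanation of \emph{why} two matrices is the limit of this technique; what the paper's (i.e., Huang--Zhang's) route buys is an algorithm that manipulates the Gram vectors directly with closed-form scalar solves and no need to form compressions to two-dimensional subspaces, which is why it is the version the paper actually implements. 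Your polynomial-time argument is also fine as sketched --- the constructive Toeplitz--Hausdorff step via $2\times2$ compressions is standard --- though to make it fully rigorous you should note explicitly that when the two realizing vectors are linearly dependent the segment degenerates and no work is needed, and that forming the $2\times2$ compression requires orthonormalizing the pair first.
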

We refer the interested readers to \cite{HdMZ10} for the proof. To be self-contained, the algorithm for computing the decomposition guaranteed by Theorem \ref{thm:decompose}, which runs in polynomial time, is given in Algorithm \ref{alg:1}.
\begin{algorithm}
\caption{Algorithm for computing the decomposition guaranteed by Theorem \ref{thm:decompose}}\label{alg:1}
\begin{algorithmic}[1]
\REQUIRE Hermitian matrices  $\mathbf{A}$ and $\mathbf{B}$, and Hermitian positive semidefinite matrix $\mathbf{Z}$ with $R=\mathrm{rank}(\mathbf{Z})$.
\ENSURE $\mathbf{Z}=\sum_{r=1}^R \mathbf{z}_r\mathbf{z}_r^H$, a rank-one decomposition of $\mathbf{Z}$ such that $\mathbf{z}_r^H\mathbf{A}\mathbf{z}_r=\frac{\mathrm{tr}(\mathbf{AZ})}{R}$, $\mathbf{z}_r^H\mathbf{B}\mathbf{z}_r=\frac{\mathrm{tr}(\mathbf{BZ})}{R}$, for $r=1,\ldots,R$.
\STATE Compute an arbitrary rank-one decomposition $\mathbf{p}_1, \mathbf{p}_2, \ldots, \mathbf{p}_R$ such that $\mathbf{Z}=\sum_{r=1}^R \mathbf{p}_r\mathbf{p}_r^H$ using, for example, Cholesky factorization.
\STATE Let $r=1$.
\REPEAT
\IF {$(\mathbf{p}_r^H\mathbf{A}\mathbf{p}_r-\frac{\mathrm{tr}(\mathbf{AZ})}{R})(\mathbf{p}_j^H\mathbf{A}\mathbf{p}_j-\frac{\mathrm{tr}(\mathbf{AZ})}{R})\geq 0$ for all $j=r+1,\ldots,R$}
\STATE $\mathbf{q}_r:=\mathbf{p}_r$.
\ELSE
\STATE Let $l\in{r+1,\ldots,R}$ be such that $(\mathbf{p}_r^H\mathbf{A}\mathbf{p}_r-\frac{\mathrm{tr}(\mathbf{AZ})}{R})(\mathbf{p}_l^H\mathbf{A}\mathbf{p}_l-\frac{\mathrm{tr}(\mathbf{AZ})}{R})<0$.
\STATE Determine $\gamma$ such that $(\mathbf{p}_r+\gamma\mathbf{p}_l)^H\mathbf{A}(\mathbf{p}_r+\gamma\mathbf{p}_l)=\frac{\mathrm{tr}(\mathbf{AZ})}{R}(1+\gamma^2)$.
\STATE $\mathbf{q}_r:=\frac{\mathbf{p}_r+\gamma\mathbf{p}_l}{\sqrt{1+\gamma^2}}$, and set $\mathbf{p}_l:=\frac{-\gamma\mathbf{p}_r+\mathbf{p}_l}{\sqrt{1+\gamma^2}}$
\ENDIF
\IF {$r=R-1$}
\STATE $\mathbf{q}_R:=\mathbf{p}_l$.
\ENDIF
\STATE $r:=r+1$.
\UNTIL $r=R-1$ \COMMENT{Comment: We have now found a decomposition $\mathbf{Z}=\sum_{r=1}^R \mathbf{q}_r\mathbf{q}_r^H$ such that $\mathbf{q}_r^H\mathbf{A}\mathbf{q}_r=\frac{\mathrm{tr}(\mathbf{AZ})}{R}$.}
\STATE Let $r=1$.
\REPEAT
\IF {$(\mathbf{q}_r^H\mathbf{B}\mathbf{q}_r-\frac{\mathrm{tr}(\mathbf{BZ})}{R})(\mathbf{q}_j^H\mathbf{B}\mathbf{q}_j-\frac{\mathrm{tr}(\mathbf{BZ})}{R})\geq 0$ for all $j=r+1,\ldots,R$}
\STATE $\mathbf{z}_r:=\mathbf{q}_r$.
\ELSE
\STATE Let $l\in{r+1,\ldots,R}$ be such that $(\mathbf{q}_r^H\mathbf{A}\mathbf{q}_r-\frac{\mathrm{tr}(\mathbf{BZ})}{R})(\mathbf{q}_l^H\mathbf{B}\mathbf{q}_l-\frac{\mathrm{tr}(\mathbf{BZ})}{R})<0$.
\STATE Compute the arguments $\alpha_1:=\arg(\mathbf{q}_r^H\mathbf{A}\mathbf{q}_l)$ and $\alpha_2:=\arg(\mathbf{q}_r^H\mathbf{B}\mathbf{q}_l)$ and the modulus $\gamma_0=|\mathbf{q}_r^H\mathbf{B}\mathbf{q}_l|$, and determine $\gamma$ such that
$
\big(\mathbf{q}_r^H\mathbf{B}\mathbf{q}_r-\frac{\mathrm{tr}(\mathbf{BZ})}{R}\big)\gamma^2+2\gamma_0\sin(\alpha_2-\alpha_1)\gamma+\mathbf{q}_l^H\mathbf{B}\mathbf{q}_l-\frac{\mathrm{tr}(\mathbf{BZ})}{R}=0. $
\STATE Set $w=\gamma e^{i(\alpha_1+\pi/2)}$.
\STATE $\mathbf{z}_r:=\frac{w\mathbf{q}_r+\mathbf{q}_l}{\sqrt{1+\gamma^2}}$, and set $\mathbf{q}_l:=\frac{-\mathbf{q}_r+\bar{w}\mathbf{q}_l}{\sqrt{1+\gamma^2}}$
\ENDIF
\IF {$r=R-1$}
\STATE $\mathbf{z}_R:=\mathbf{q}_l$.
\ENDIF
\UNTIL $r=R-1$
\end{algorithmic}
\end{algorithm}

\subsection{Optimal Rank-One Solution when $K=1$}\label{subsection:4-1}
When $K=1$, there are two quadratic constraints in \eqref{eqn:QCQP}, namely, $\mathbf{t}_S^H\mathbf{Q}_1\mathbf{t}_S \leq 1$ and $\mathbf{t}_S^H\mathbf{t}_S \leq P_{S,max}$. Let $\mathbf{X}^*$ be an arbitrary optimal solution to \eqref{eqn:SDP1}, which can be obtained in polynomial time by standard interior-point algorithms. We now show how to construct a rank-one solution to \eqref{eqn:SDP1} from $\mathbf{X}^*$.  By Theorem \ref{thm:decompose}, we can find a rank-one decomposition $\mathbf{X^*}=\sum_{r=1}^R\mathbf{t}_r\mathbf{t}_r^H$ such that
\begin{eqnarray}\label{X-decompose}
\mathbf{t}_r^H\mathbf{Q}_1\mathbf{t}_r=\frac{\mathrm{tr}(\mathbf{Q}_1\mathbf{X^*})}{R} \quad\mbox{and}\quad \mathbf{t}_r^H\mathbf{t}_r=\frac{\mathrm{tr}(\mathbf{X^*})}{R} \nonumber\\
\forall r=1,\ldots,R,
\end{eqnarray}
where $R$ is the rank of $\mathbf{X^*}$. Now, choose an arbitrary $\mathbf{t}_r$, say $\mathbf{t}_1$, and let $\mathbf{\hat{t}}=\sqrt{R}\mathbf{t}_1$. From \eqref{X-decompose} and the fact that $\mathbf{X^*}$ is a feasible for \eqref{eqn:SDP1}, we have
\begin{eqnarray}\label{eqn:feasible1}
&&\mathbf{\hat{t}}^H\mathbf{Q}_1\mathbf{\hat{t}}=\mathrm{tr}(\mathbf{Q}_1\mathbf{\hat{t}}\mathbf{\hat{t}}^H)=\mathrm{tr}(\mathbf{Q}_1\mathbf{X^*})\leq 1 \nonumber\\
&\mbox{and}&\quad \mathbf{\hat{t}}^H\mathbf{\hat{t}}=\mathrm{tr}(\mathbf{\hat{t}}\mathbf{\hat{t}}^H)=\mathrm{tr}(\mathbf{X^*})\leq P_{S,max}.
\end{eqnarray}
Thus, we conclude from \eqref{eqn:feasible1} that $\mathbf{\hat{t}}\mathbf{\hat{t}}^H$ is a rank-one feasible solution to \eqref{eqn:SDP1}.

Now, from the complementary conditions in \eqref{eqn:comp}, we have
\begin{eqnarray*}
 &&\mathrm{tr}\left(\mathbf{X}^*\left(y_1^*\mathbf{Q}_1+y_2^*\mathbf{I}-\mathbf{A}\right)\right) \\
 &=& \sum_{r=1}^R\mathrm{tr}\left(\mathbf{t}_r\mathbf{t}_r^H\left( y_1^*\mathbf{Q}_1+y_2^*\mathbf{I}-\mathbf{A} \right)\right) = 0. 
 \end{eqnarray*}
Since $y_1^*\mathbf{Q}_1+y_2^*\mathbf{I}-\mathbf{A} \succeq \mathbf{0}$ by \eqref{eqn:SDP-dual-constraint1}, we have $\mathrm{tr}\left(\mathbf{t}_r\mathbf{t}_r^H\left(y_1^*\mathbf{Q}_1+y_2^*\mathbf{I}-\mathbf{A}\right)\right)=0$ for $r=1,\ldots,R$.  This implies that $\mathrm{tr}\left(\mathbf{\hat{t}}\mathbf{\hat{t}}^H\left(y_1^*\mathbf{Q}_1+y_2^*\mathbf{I}-\mathbf{A}\right)\right)=0$.  In a similar fashion, since $\mathrm{tr}(\mathbf{Q}_1\mathbf{\hat{t}}\mathbf{\hat{t}}^H)=\mathrm{tr}(\mathbf{Q}_1\mathbf{X^*})$ and $\mathrm{tr}(\mathbf{\hat{t}}\mathbf{\hat{t}}^H)=\mathrm{tr}(\mathbf{X^*})$, we have $y_1^*\left(\mathrm{tr}(\mathbf{Q}_1\mathbf{\hat{t}}\mathbf{\hat{t}}^H)-1\right)=0$ and $y_2^*\left(\mathrm{tr}(\mathbf{\hat{t}}\mathbf{\hat{t}}^H)-P_{S,max}\right)=0$.  This, together with the feasibility conditions in \eqref{eqn:feasible1}, leads to the conclusion that the rank-one matrix $\mathbf{\hat{t}}\mathbf{\hat{t}}^H$ is an optimal solution to \eqref{eqn:SDP1}, and $\mathbf{\hat{t}}$ is an optimal solution to \eqref{eqn:QCQP}.

Note that $\mathbf{\hat{t}}$ can be computed in polynomial time, as both $\mathbf{X}^*$ and the decomposition $\mathbf{X^*}=\sum_{r=1}^R\mathbf{t}_r\mathbf{t}_r^H$ can be computed in polynomial time.

\subsection{Optimal Rank-One Solution when $K=2$}\label{subsection:4-2}
When $K=2$, there are three quadratic constraints in \eqref{eqn:QCQP}, namely, $\mathbf{t}_S^H\mathbf{Q}_1\mathbf{t}_S \leq 1$, $\mathbf{t}_S^H\mathbf{Q}_2\mathbf{t}_S \leq 1$ and $\mathbf{t}_S^H\mathbf{t}_S \leq P_{S,max}$. In the following, we show how to construct a rank-one solution from an arbitrary optimal solution $\mathbf{X}^*$ to \eqref{eqn:SDP1} in two different cases.

\subsubsection{\textbf{At least one constraint in (\ref{eqn:SDP1}) is non-binding at optimality}}
Without loss of generality, suppose that $\mathrm{tr}(\mathbf{Q}_1\mathbf{X^*}) <1$ is the non-binding constraint, while $\mathrm{tr}(\mathbf{Q}_2\mathbf{X^*}) \leq 1$ and $\mathrm{tr}(\mathbf{X^*}) \leq P_{S,max}$ can be either binding or non-binding. Due to the complementary conditions \eqref{eqn:comp2}, we must have $y_1^*=0$.  Now, construct, via Theorem \ref{thm:decompose}, a rank-one decomposition $\mathbf{X^*}=\sum_{r=1}^R\mathbf{t}_r\mathbf{t}_r^H$ such that $\mathbf{t}_r^H\mathbf{Q}_2\mathbf{t}_r=\frac{\mathrm{tr}(\mathbf{Q}_2\mathbf{X^*})}{R}$ and $\mathbf{t}_r^H\mathbf{t}_r=\frac{\mathrm{tr}(\mathbf{X^*})}{R}$ for $r=1,\ldots,R$. Since $\mathrm{tr}(\mathbf{Q}_1\mathbf{X^*})=\sum_{r=1}^R\mathbf{t}_r^H\mathbf{Q}_1\mathbf{t}_r$, there must exist an $r\in\{1,\ldots,R\}$ such that $\mathbf{t}_r^H\mathbf{Q}_1\mathbf{t}_r\leq \frac{\mathrm{tr}(\mathbf{Q}_1\mathbf{X^*})}{R}$. Without loss of generality, assume that $r=1$ and let $\mathbf{\hat{t}}=\sqrt{R}\mathbf{t}_1$. By the same argument as in the preceding subsection, the following feasible conditions and complementary conditions hold:
\begin{eqnarray*}
&&\mathrm{tr}(\mathbf{Q}_2\mathbf{\hat{t}}\mathbf{\hat{t}}^H) \leq 1,~~~
\mathrm{tr}(\mathbf{\hat{t}}\mathbf{\hat{t}}^H) \leq P_{S,max},\\
&&\mathrm{tr}\left(\mathbf{\hat{t}}\mathbf{\hat{t}}^H\left(y_1^*\mathbf{Q}_1+y_2^*\mathbf{Q}_2+y_3^*\mathbf{I}-\mathbf{A}\right)\right)=0,\\
&&y_2^*\left(\mathrm{tr}(\mathbf{Q}_2\mathbf{\hat{t}}\mathbf{\hat{t}}^H)-1\right)=0,~~
y_3^*\left(\mathrm{tr}(\mathbf{\hat{t}}\mathbf{\hat{t}}^H)-P_{S,max}\right)=0.
\end{eqnarray*}
Moreover, we have $\mathrm{tr}(\mathbf{Q}_1\mathbf{\hat{t}}\mathbf{\hat{t}}^H)=R\cdot\mathrm{tr}(\mathbf{Q}_1\mathbf{t}_1\mathbf{t}_1^H) \le \mathrm{tr}(\mathbf{Q}_1\mathbf{X^*})\leq 1$ and $y_1^*\left(\mathrm{tr}(\mathbf{Q}_1\mathbf{\hat{t}}\mathbf{\hat{t}}^H)-1\right)=0$ since $y_1^*=0$.  Hence, $\mathbf{\hat{t}}\mathbf{\hat{t}}^H$ is an optimal rank-one solution to \eqref{eqn:SDP1} and $\mathbf{\hat{t}}$ is an optimal solution to \eqref{eqn:QCQP}.

\subsubsection{\textbf{All constraints in (\ref{eqn:SDP1}) are binding at optimality}}
When all constraints are binding, we have $\mathrm{tr}(\mathbf{Q}_1\mathbf{X}^*)=\mathrm{tr}(\mathbf{Q}_2\mathbf{X}^*)=1$ and $\mathrm{tr}(\mathbf{X}^*)=P_{S,max}$. Then, we have $\mathrm{tr}((\mathbf{Q}_1-\mathbf{Q}_2)\mathbf{X}^*)=\mathrm{tr}((\mathbf{Q}_2-\frac{1}{P_{S,max}}\mathbf{I})\mathbf{X}^*)=0$.  In this case, we construct, again via Theorem \ref{thm:decompose}, a rank-one decomposition $\mathbf{X^*}=\sum_{r=1}^R\mathbf{t}_r\mathbf{t}_r^H$ such that
\begin{equation} \label{eqn:equal1}
   \mathbf{t}_r^H\left(\mathbf{Q}_1-\mathbf{Q}_2\right)\mathbf{t}_r = \frac{\mathrm{tr}\left((\mathbf{Q}_1-\mathbf{Q}_2)\mathbf{X^*}\right)}{R} = 0
\end{equation}
and
\begin{equation} \label{eqn:equal2}
   \mathbf{t}_r^H\left(\mathbf{Q}_2-\frac{1}{P_{S,max}}\mathbf{I}\right)\mathbf{t}_r = \frac{\mathrm{tr}\left(\left(\mathbf{Q}_2-\frac{1}{P_{S,max}}\mathbf{I}\right)\mathbf{X^*}\right)}{R} = 0
\end{equation}
for $r=1,\ldots,R$.  Since $\mathrm{tr}(\mathbf{Q}_1\mathbf{X}^*)=1$, there must exist an $r\in\{1,\ldots,R\}$, say $r=1$, such that $\mathrm{tr}(\mathbf{Q}_1\mathbf{t_1}\mathbf{t_1}^H)=s>0$. Let $\mathbf{\hat{t}}=\frac{\mathbf{t_1}}{\sqrt{s}}$, and consequently we have $\mathrm{tr}(\mathbf{Q}_1\mathbf{\hat{t}}\mathbf{\hat{t}}^H)=1$.  This, together with \eqref{eqn:equal1} and \eqref{eqn:equal2}, leads to $\mathrm{tr}(\mathbf{Q}_2\mathbf{\hat{t}}\mathbf{\hat{t}}^H)=1$ and $\mathrm{tr}(\mathbf{\hat{t}}\mathbf{\hat{t}}^H)=P_{S,max}$. Hence, $\mathbf{\hat{t}}\mathbf{\hat{t}}^H$ is a feasible solution, and the complementary conditions in \eqref{eqn:comp2} are satisfied. Furthermore, since $\left(y_1^*\mathbf{Q}_1+y_2^*\mathbf{Q}_2+y_3^*\mathbf{I}-\mathbf{A}\right)\succeq\mathbf{0}$, we see that $\mathrm{tr}\left(\mathbf{\hat{t}}\mathbf{\hat{t}}^H\left(y_1^*\mathbf{Q}_1+y_2^*\mathbf{Q}_2+y_3^*\mathbf{I}-\mathbf{A}\right)\right)=0$.  Hence, we conclude that $\mathbf{\hat{t}}\mathbf{\hat{t}}^H$ is an optimal rank-one solution to \eqref{eqn:SDP1}, and $\mathbf{\hat{t}}$ is an optimal solution to the QCQP problem \eqref{eqn:QCQP}. Moreover, $\mathbf{\hat{t}}$ can be computed in polynomial time.

\section{Rank-One Solution when $K\geq3$}\label{sec:largeK}
When $K\ge3$, there may not exist any rank-one optimal solution to the SDP \eqref{eqn:SDP1}.  Moreover, the QCQP problem (\ref{eqn:QCQP}) is NP-hard in general, and hence it is unlikely that we can extract, in polynomial time, an optimal solution to it from an optimal solution to the SDP (\ref{eqn:SDP1}).  However, as we shall see, we can generate a provably near-optimal solution to the QCQP problem (\ref{eqn:QCQP}) from an optimal solution to the SDP (\ref{eqn:SDP1}) using a very simple randomized procedure.

To begin, let $\mathbf{X}^*$ be an optimal solution to the SDP \eqref{eqn:SDP1}.  Define $\mathbf{Q}_{K+1}=\frac{1}{P_{S,max}}\mathbf{I}$, so that constraint \eqref{eqn:bf-power} is equivalent to $\mathbf{t}_S^H\mathbf{Q}_{K+1}\mathbf{t}_S \leq 1$.  Consider the randomized procedure outlined in Algorithm \ref{alg:2} for generating a feasible solution to (\ref{eqn:QCQP}) from $\mathbf{X}^*$.  Algorithm \ref{alg:2} can be viewed as a generalization of the procedure developed by Nemirovski et al.~\cite{NRT99} for handling {\it real} homogeneous QCQP problems.  Our goal now is to show that Algorithm \ref{alg:2} indeed returns a feasible solution to (\ref{eqn:QCQP}).  In fact, we will prove in Theorem \ref{thm:random} that the solution returned by Algorithm \ref{alg:2} is not only feasible to (\ref{eqn:QCQP}), but is also likely to be a good one, in the sense that it has an objective value that is close to the optimal value of the QCQP problem (\ref{eqn:QCQP}).  We note that such a phenomenon can also be observed from our simulations, as will be explained in the next section.

\begin{algorithm}
\caption{Generate a feasible solution to (\ref{eqn:QCQP}) from an optimal solution $\mathbf{X}^*$ to (\ref{eqn:SDP1})}\label{alg:2}
\begin{algorithmic}[1]
\REQUIRE An optimal solution $\mathbf{X}^*$ to the SDP (\ref{eqn:SDP1}).
\ENSURE A feasible solution $\mathbf{t}$ to \eqref{eqn:QCQP}.

\STATE Decompose $\mathbf{X}^*=\mathbf{\Delta}^H\mathbf{\Delta}$, where $\mathbf{\Delta}\in\mathbb{C}^{M_S \times M_S}$. Let $\widetilde{\mathbf{A}}=\mathbf{\Delta}\mathbf{A}\mathbf{\Delta}^H$ and $\widetilde{\mathbf{Q}}_k=\mathbf{\Delta}\mathbf{Q}_k\mathbf{\Delta}^H$, where $k=1,\ldots,K+1$.  It can be shown that $\mathrm{tr}(\widetilde{\mathbf{A}})=\mathrm{tr}(\mathbf{A}\mathbf{X}^*)$ and $\mathrm{tr}(\widetilde{\mathbf{Q}}_k)=\mathrm{tr}(\mathbf{Q}_k\mathbf{X}^*)\leq 1$.

\STATE Find an unitary matrix $\mathbf{U}$ that diagonalizes $\widetilde{\mathbf{A}}$, i.e., $\widehat{\mathbf{A}}=\mathbf{U}^H\widetilde{\mathbf{A}}\mathbf{U}$ is a diagonal matrix. Set $\widehat{\mathbf{Q}}_k=\mathbf{U}^H\widetilde{\mathbf{Q}}_k\mathbf{U}$.

\STATE Let $\mathbf{\xi}$ be an $M_S\times1$ random vector whose entries are independently and uniformly distributed on the unit circle in the complex plane. In other words, we have $[\mathbf{\xi}]_i=e^{j\theta_i}$, where $\theta_i$ is uniformly distributed between 0 and $2\pi$.

\STATE Return $\mathbf{t}=\frac{1}{\sqrt{ \max_k \mathbf{\xi}^H\widehat{\mathbf{Q}}_k\mathbf{\xi}, }}\mathbf{\Delta}^H\mathbf{U}\mathbf{\xi}$ as the solution.
\end{algorithmic}
\end{algorithm}


Before we introduce and prove Theorem \ref{thm:random}, let us note the following facts:
\begin{fact}\label{fact:rank}
\cite{HZ07} There exists an optimal solution to Problem \eqref{eqn:SDP1} with rank $R\leq\sqrt{K+1}$, where $K+1$ is the number of quadratic constraints. Moreover, such an optimal solution can be found in polynomial time.
\end{fact}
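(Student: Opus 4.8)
The plan is to prove Fact~\ref{fact:rank} by the complex analogue of the Barvinok--Pataki dimension-counting argument, and then to convert that argument into an iterative rank-reduction (purification) loop for the polynomial-time claim.

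First I would secure existence of an optimum and pass to a well-chosen one. The feasible set of \eqref{eqn:SDP1} lies inside $\{\mathbf{X}\succeq\mathbf{0}:\mathrm{tr}(\mathbf{X})\leq P_{S,max}\}$, which is compact, and the objective is continuous, so the optimal value $v^*$ is attained. I would then take $\mathbf{X}^*$ to be an extreme point of the convex optimal face $\mathcal{F}=\{\mathbf{X}\text{ feasible}:\mathrm{tr}(\mathbf{A}\mathbf{X})=v^*\}$, set $R=\mathrm{rank}(\mathbf{X}^*)$, and factor $\mathbf{X}^*=\mathbf{V}\mathbf{V}^H$ with $\mathbf{V}\in\mathbb{C}^{M_S\times R}$ of full column rank.

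The core is a perturbation count. For Hermitian $\mathbf{\Delta}\in\mathbb{C}^{R\times R}$ the matrix $\mathbf{V}(\mathbf{I}\pm t\mathbf{\Delta})\mathbf{V}^H=\mathbf{X}^*\pm t\mathbf{V}\mathbf{\Delta}\mathbf{V}^H$ is positive semidefinite for all small $|t|$, and the real space of such $\mathbf{\Delta}$ has dimension $R^2$ (an $R\times R$ Hermitian matrix has $R^2$ real degrees of freedom, not $R(R+1)/2$ as in the real-symmetric case---this is precisely what produces the bound $\sqrt{K+1}$). I would require the perturbation to preserve every active inequality constraint, i.e.\ $\mathrm{tr}(\mathbf{V}^H\mathbf{Q}_k\mathbf{V}\mathbf{\Delta})=0$ for each active $k$ and $\mathrm{tr}(\mathbf{V}^H\mathbf{V}\mathbf{\Delta})=0$ if the power constraint is binding; inactive constraints survive automatically. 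These are at most $K+1$ homogeneous real-linear equations. The point that yields the sharp constant is that the objective is then preserved for free: with a dual-optimal $(y_1^*,\dots,y_{K+1}^*)$ (which exists by Slater, Lemma~\ref{lem:2}) and slack $\mathbf{S}^*=\sum_k y_k^*\mathbf{Q}_k+y_{K+1}^*\mathbf{I}-\mathbf{A}\succeq\mathbf{0}$, complementary slackness \eqref{eqn:comp1} gives $\mathrm{tr}(\mathbf{S}^*\mathbf{X}^*)=0$, whence $\mathbf{S}^*\mathbf{X}^*=\mathbf{0}$ and therefore $\mathbf{S}^*\mathbf{V}=\mathbf{0}$; expanding $\mathbf{A}$ through $\mathbf{S}^*$ and using \eqref{eqn:comp2} to kill the inactive terms then shows $\mathrm{tr}(\mathbf{A}\mathbf{V}\mathbf{\Delta}\mathbf{V}^H)=0$. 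Hence no separate equation is needed for the objective. If $R^2>K+1$ the homogeneous system has a nonzero Hermitian solution $\mathbf{\Delta}$, and $\mathbf{X}^*\pm t\mathbf{V}\mathbf{\Delta}\mathbf{V}^H$ are distinct points of $\mathcal{F}$ averaging to $\mathbf{X}^*$, contradicting extremality; therefore $R^2\leq K+1$, i.e.\ $R\leq\sqrt{K+1}$.

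For the complexity claim I would make this constructive. Starting from any optimal $\mathbf{X}^*$ produced by an interior-point solver, while $R^2>K+1$ I would compute a nonzero $\mathbf{\Delta}$ in the null space of the linear system above, then take the step $t=-1/\lambda$ with $\lambda$ the extreme eigenvalue of $\mathbf{\Delta}$ of the sign that first makes $\mathbf{I}+t\mathbf{\Delta}$ singular; at that $t$ the rank of $\mathbf{X}^*+t\mathbf{V}\mathbf{\Delta}\mathbf{V}^H$ drops by at least one while optimality and feasibility persist. Each pass is a null-space plus an eigenvalue computation, and the rank strictly decreases, so the loop halts after at most $M_S$ passes, giving the claimed polynomial running time. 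The main obstacle---and the only delicate point---is securing the sharp constant $\sqrt{K+1}$ rather than $\sqrt{K+2}$, which hinges entirely on the identity $\mathbf{S}^*\mathbf{V}=\mathbf{0}$ so that the objective constraint need not be counted; the complex real-dimension count $R^2$ and the existence of the dual slack $\mathbf{S}^*$ (via Lemma~\ref{lem:2}) are the two supporting facts to check.
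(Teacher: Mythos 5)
The paper offers no proof of this Fact---it is imported verbatim from \cite{HZ07}---so your proposal is to be judged against the standard complex rank-reduction argument, which is indeed what the cited reference does. Your existence half is correct: the optimal face is a compact face of the feasible set, an extreme point $\mathbf{X}^*=\mathbf{V}\mathbf{V}^H$ of it exists, the space of Hermitian $R\times R$ perturbations has real dimension $R^2$ (this is exactly where the complex case improves on the real bound $R(R+1)/2\le m$), the at most $K+1$ active constraints impose at most $K+1$ real-linear conditions, and the chain $\mathrm{tr}(\mathbf{S}^*\mathbf{X}^*)=0\Rightarrow\mathbf{S}^*\mathbf{X}^*=\mathbf{0}\Rightarrow\mathbf{S}^*\mathbf{V}=\mathbf{0}$ together with $y_k^*=0$ for inactive $k$ (via \eqref{eqn:comp2}) correctly shows the objective is preserved without costing an extra equation. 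That yields $R^2\le K+1$ as claimed.

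The gap is in the constructive loop. You build $\mathbf{\Delta}$ so that only the \emph{active} constraints are annihilated, remarking that ``inactive constraints survive automatically.'' That is true for infinitesimal $t$, but your algorithm then jumps to the finite step $t=-1/\lambda$ at which $\mathbf{I}+t\mathbf{\Delta}$ first becomes singular. Along the ray, an inactive constraint value evolves as $\mathrm{tr}(\mathbf{Q}_k\mathbf{X}^*)+t\,\mathrm{tr}(\mathbf{V}^H\mathbf{Q}_k\mathbf{V}\mathbf{\Delta})$ with a generally nonzero slope, and it can cross $1$ strictly before $t$ reaches $-1/\lambda$; at that point ``feasibility persists'' fails. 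Two standard repairs: (i) impose $\mathrm{tr}(\mathbf{V}^H\mathbf{Q}_k\mathbf{V}\mathbf{\Delta})=0$ for \emph{all} $k=1,\ldots,K+1$ (still only $K+1$ real equations, so the dimension count and the bound are untouched, and every constraint value is now constant along the ray), or (ii) stop at the first event---either $\mathbf{I}+t\mathbf{\Delta}$ loses rank or an inactive constraint becomes active---noting that the second event can occur at most $K+1$ times overall, so the loop still terminates in polynomially many passes. With either fix your argument is complete; as written, the claim that the full-length step preserves feasibility is unjustified.
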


\begin{fact}\label{fact:Qk}
We have $\mathrm{rank}(\widehat{\mathbf{Q}}_k)=\mathrm{rank}(\tilde{\mathbf{Q}}_k)\leq \mu:=\min\{\sqrt{K+1}, M_S\}$ for $k=1,\ldots K+1$.  In particular, we can decompose $\widehat{\mathbf{Q}}_k$ as $\widehat{\mathbf{Q}}_k=\sum_{j=1}^\mu \mathbf{f}^{kj}(\mathbf{f}^{kj})^H$ for some $\mathbf{f}^{kj}\in\mathbb{C}^{M_S}$.
\end{fact}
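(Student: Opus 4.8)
The plan is a direct rank-counting argument, resting on the low-rank optimal solution guaranteed by Fact \ref{fact:rank} together with the elementary facts that rank is unchanged under multiplication by a nonsingular matrix and can only decrease under left-multiplication by any matrix. First I would insist that $\mathbf{X}^*$ be the particular optimal solution of \eqref{eqn:SDP1} furnished by Fact \ref{fact:rank}, so that $R := \mathrm{rank}(\mathbf{X}^*) \le \sqrt{K+1}$; this is the crucial point, since an arbitrarily chosen optimizer could be full rank. Writing $\mathbf{X}^* = \mathbf{\Delta}^H\mathbf{\Delta}$ as in Algorithm \ref{alg:2}, I would record the identity $\mathrm{rank}(\mathbf{\Delta}) = \mathrm{rank}(\mathbf{\Delta}^H\mathbf{\Delta}) = \mathrm{rank}(\mathbf{X}^*) = R$, which holds because $\mathbf{\Delta}$ and $\mathbf{\Delta}^H\mathbf{\Delta}$ share the same null space ($\mathbf{\Delta}^H\mathbf{\Delta}\mathbf{x}=\mathbf{0}$ implies $\|\mathbf{\Delta}\mathbf{x}\|_2^2=0$).

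Next I would establish the two claimed relations. The equality $\mathrm{rank}(\widehat{\mathbf{Q}}_k) = \mathrm{rank}(\widetilde{\mathbf{Q}}_k)$ is immediate from $\widehat{\mathbf{Q}}_k = \mathbf{U}^H\widetilde{\mathbf{Q}}_k\mathbf{U}$ with $\mathbf{U}$ unitary and hence nonsingular, so that conjugation by $\mathbf{U}$ preserves rank. For the inequality I would combine two bounds on $\widetilde{\mathbf{Q}}_k = \mathbf{\Delta}\mathbf{Q}_k\mathbf{\Delta}^H$: on one hand $\mathrm{rank}(\mathbf{\Delta}\mathbf{Q}_k\mathbf{\Delta}^H) \le \mathrm{rank}(\mathbf{\Delta}) = R \le \sqrt{K+1}$; on the other hand, since $\widetilde{\mathbf{Q}}_k$ is an $M_S \times M_S$ matrix, its rank is at most $M_S$. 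Together these give $\mathrm{rank}(\widehat{\mathbf{Q}}_k) = \mathrm{rank}(\widetilde{\mathbf{Q}}_k) \le \min\{\sqrt{K+1}, M_S\} = \mu$.

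Finally, for the rank-one decomposition I would invoke positive semidefiniteness. Each $\mathbf{Q}_k$ is Hermitian positive semidefinite (as established for $\mathbf{Q}_k^1, \mathbf{Q}_k^2$ in Sections \ref{subsection:3-2}--\ref{subsection:3-4}, and $\mathbf{Q}_{K+1}=\tfrac{1}{P_{S,max}}\mathbf{I}\succeq\mathbf{0}$), and both the congruence $\mathbf{\Delta}(\cdot)\mathbf{\Delta}^H$ and the unitary conjugation $\mathbf{U}^H(\cdot)\mathbf{U}$ map the positive semidefinite cone into itself, so $\widehat{\mathbf{Q}}_k \succeq \mathbf{0}$. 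Its spectral decomposition $\widehat{\mathbf{Q}}_k = \sum_i \lambda_i \mathbf{v}_i\mathbf{v}_i^H$ then has exactly $\mathrm{rank}(\widehat{\mathbf{Q}}_k) \le \mu$ strictly positive eigenvalues; setting $\mathbf{f}^{kj} = \sqrt{\lambda_j}\,\mathbf{v}_j$ on the nonzero eigenvalues and padding the remaining terms up to index $\mu$ with the zero vector yields $\widehat{\mathbf{Q}}_k = \sum_{j=1}^\mu \mathbf{f}^{kj}(\mathbf{f}^{kj})^H$. There is no genuine obstacle in this argument; the only items demanding care are to use the low-rank solution of Fact \ref{fact:rank} rather than an arbitrary optimizer, and to read $\mu$ as an integer upper bound, so that when $\sqrt{K+1}$ is not an integer the number of nonzero rank-one terms is in fact at most $\lfloor\sqrt{K+1}\rfloor$ and the padding with zeros makes the displayed sum well defined.
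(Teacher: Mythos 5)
Your proof is correct. The paper actually states Fact~\ref{fact:Qk} without any proof, so there is no argument of the authors' to compare against; your derivation supplies exactly the justification that the surrounding text presupposes, namely: take for $\mathbf{X}^*$ the low-rank optimizer guaranteed by Fact~\ref{fact:rank} (not an arbitrary one, which could indeed be full rank), note $\mathrm{rank}(\mathbf{\Delta})=\mathrm{rank}(\mathbf{\Delta}^H\mathbf{\Delta})=\mathrm{rank}(\mathbf{X}^*)=R\le\sqrt{K+1}$, and then bound $\mathrm{rank}(\widehat{\mathbf{Q}}_k)=\mathrm{rank}(\mathbf{U}^H\mathbf{\Delta}\mathbf{Q}_k\mathbf{\Delta}^H\mathbf{U})\le\mathrm{rank}(\mathbf{\Delta})$, with the trivial bound $M_S$ giving the minimum. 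The rank-one decomposition via the spectral decomposition of the positive semidefinite matrix $\widehat{\mathbf{Q}}_k$, padded with zero vectors up to $\mu$ terms, is also the standard construction. Your two points of care are both well taken: the statement implicitly assumes the solution of Fact~\ref{fact:rank} is the one fed to Algorithm~\ref{alg:2}, and since $\sqrt{K+1}$ need not be an integer, the bound should really be read as $\lfloor\sqrt{K+1}\rfloor$ for the sum $\sum_{j=1}^{\mu}$ to be literally well defined (a harmless imprecision that the paper inherits in the constant of Theorem~\ref{thm:random} as well). No gaps.
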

In order to study the quality of the solution returned by Algorithm \ref{alg:2}, we need the following lemmata:
\begin{lemma}\label{lem:A}
Let $\alpha>0$ be given.  Consider the events
$$ \mathcal{A}_{kj} = \left\{\mathbf{\xi}^H\mathbf{f}^{kj}(\mathbf{f}^{kj})^H\mathbf{\xi}>\alpha\|\mathbf{f}^{kj}\|_2^2\right\},~~
\mathcal{A} = \bigcup_{k=1}^K\bigcup_{j=1}^\mu \mathcal{A}_{kj}, $$
where $\mathbf{f}^{kj}$ is obtained from the rank-one decomposition of $\widehat{\mathbf{Q}}_k$ (see Fact \ref{fact:Qk}). Then, we have
$$
\Pr\left\{\max_{1\le k\le K+1} \mathbf{\xi}^H\widehat{\mathbf{Q}}_k\mathbf{\xi}>\alpha\right\} \le \Pr\{\mathcal{A}\}.
$$
\end{lemma}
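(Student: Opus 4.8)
The plan is to reduce the probabilistic claim to a purely set-theoretic event inclusion
$$\left\{\max_{1\le k\le K+1}\mathbf{\xi}^H\widehat{\mathbf{Q}}_k\mathbf{\xi}>\alpha\right\}\subseteq\mathcal{A},$$
after which the stated bound follows at once from monotonicity of the probability measure $\Pr\{\cdot\}$. The entire argument is deterministic in $\mathbf{\xi}$: I would fix an arbitrary realization of $\mathbf{\xi}$ and show that whenever the left-hand event occurs, at least one of the events $\mathcal{A}_{kj}$ occurs. The mechanism behind this is a pigeonhole-type estimate combined with a trace bound.

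First I would assemble the two ingredients. By Fact \ref{fact:Qk}, each transformed matrix decomposes as $\widehat{\mathbf{Q}}_k=\sum_{j=1}^{\mu}\mathbf{f}^{kj}(\mathbf{f}^{kj})^H$, so that for every fixed $\mathbf{\xi}$ we have the nonnegative expansion
$$\mathbf{\xi}^H\widehat{\mathbf{Q}}_k\mathbf{\xi}=\sum_{j=1}^{\mu}\mathbf{\xi}^H\mathbf{f}^{kj}(\mathbf{f}^{kj})^H\mathbf{\xi}.$$
The second ingredient is the trace bound $\sum_{j=1}^{\mu}\|\mathbf{f}^{kj}\|_2^2=\mathrm{tr}(\widehat{\mathbf{Q}}_k)\le1$, and this is the step requiring care. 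The trace is preserved under the congruence $\widetilde{\mathbf{Q}}_k=\mathbf{\Delta}\mathbf{Q}_k\mathbf{\Delta}^H$ (writing $\mathbf{X}^*=\mathbf{\Delta}^H\mathbf{\Delta}$ and using cyclicity of the trace) and under the unitary similarity $\widehat{\mathbf{Q}}_k=\mathbf{U}^H\widetilde{\mathbf{Q}}_k\mathbf{U}$, giving $\mathrm{tr}(\widehat{\mathbf{Q}}_k)=\mathrm{tr}(\mathbf{Q}_k\mathbf{X}^*)\le1$ by feasibility of $\mathbf{X}^*$ in \eqref{eqn:SDP1}; this is exactly the chain of identities already recorded in Step 1 of Algorithm \ref{alg:2}.

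The core step is then a short contradiction argument carried out separately for each $k$. Suppose $\mathbf{\xi}^H\widehat{\mathbf{Q}}_k\mathbf{\xi}>\alpha$ but that none of the events $\mathcal{A}_{kj}$, $j=1,\ldots,\mu$, occur, i.e.\ $\mathbf{\xi}^H\mathbf{f}^{kj}(\mathbf{f}^{kj})^H\mathbf{\xi}\le\alpha\|\mathbf{f}^{kj}\|_2^2$ for every $j$. Summing these inequalities over $j$ and invoking the trace bound yields $\mathbf{\xi}^H\widehat{\mathbf{Q}}_k\mathbf{\xi}\le\alpha\sum_{j=1}^{\mu}\|\mathbf{f}^{kj}\|_2^2\le\alpha$, contradicting the assumption. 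Hence $\{\mathbf{\xi}^H\widehat{\mathbf{Q}}_k\mathbf{\xi}>\alpha\}\subseteq\bigcup_{j=1}^{\mu}\mathcal{A}_{kj}$ for each $k$, and taking the union over $k$ establishes the event inclusion above.

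I expect the only genuine delicacy to be bookkeeping rather than any real difficulty: one must apply the trace bound to the correctly transformed matrices, and one must reconcile the index ranges. The maximum on the left runs over $k=1,\ldots,K+1$, so the inclusion can only hold if the union defining $\mathcal{A}$ is read over the same range $k=1,\ldots,K+1$ (the $K$ appearing in the statement should be $K+1$); indeed the power-constraint matrix $\widehat{\mathbf{Q}}_{K+1}$ admits the same rank-$\mu$ decomposition, since $\mathrm{rank}(\widetilde{\mathbf{Q}}_{K+1})=\mathrm{rank}(\mathbf{X}^*)\le\sqrt{K+1}\le\mu$ by Fact \ref{fact:rank}, so the contradiction argument applies verbatim to it and its contribution cannot be dropped from $\mathcal{A}$.
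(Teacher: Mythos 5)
Your proof is correct and essentially identical to the paper's: the paper likewise argues by contraposition that if no $\mathcal{A}_{kj}$ occurs then summing the rank-one terms and using $\sum_{j=1}^\mu\|\mathbf{f}^{kj}\|_2^2=\mathrm{tr}(\widehat{\mathbf{Q}}_k)=\mathrm{tr}(\widetilde{\mathbf{Q}}_k)\le 1$ gives $\max_{1\le k\le K+1}\mathbf{\xi}^H\widehat{\mathbf{Q}}_k\mathbf{\xi}\le\alpha$. Your remark about the index range is also well taken---the paper's own proof quantifies over $k=1,\ldots,K+1$, so the union defining $\mathcal{A}$ in the lemma statement should indeed run to $K+1$; that is a typo in the statement rather than a gap in your argument.
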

\noindent{\emph{Proof:}} If $\mathcal{A}$ does not take place, then we have $\mathbf{\xi}^H\mathbf{f}^{kj}(\mathbf{f}^{kj})^H\mathbf{\xi}\leq \alpha\|\mathbf{f}^{kj}\|_2^2$ for all $k=1,\ldots,K+1$ and $j=1,\ldots,\mu$.  This implies that
\begin{eqnarray}\label{eqn:lema1}
   \mathbf{\xi}^H\widehat{\mathbf{Q}}_k\mathbf{\xi} = \sum_{j=1}^\mu \mathbf{\xi}^H\mathbf{f}^{kj}(\mathbf{f}^{kj})^H\mathbf{\xi}\leq \alpha\sum_{j=1}^\mu\|\mathbf{f}^{kj}\|_2^2\nonumber\\
   \forall k=1,\ldots,K+1.
\end{eqnarray}
Note that $\sum_{j=1}^\mu\|\mathbf{f}^{kj}\|_2^2=\mathrm{tr}(\widehat{\mathbf{Q}}_k)=\mathrm{tr}(\tilde{\mathbf{Q}}_k)\leq1$.  Hence, \eqref{eqn:lema1} implies that $\max_{1\le k\le K+1} \mathbf{\xi}^H\widehat{\mathbf{Q}}_k\mathbf{\xi}\leq \alpha$ when $\mathcal{A}$ does not take place.  This completes the proof. $\hfill\blacksquare$
\begin{lemma} \label{lem:Hoeffding}
(Hoeffding's Inequality, Complex Version) Let $X_1,\ldots,X_n$ be independent complex-valued random variables with ${\mathbb E}X_i=0$ and $|X_i|\leq a_i$ for $i=1,\ldots,n$.  Then, for any $\beta>0$, we have
$$ \Pr\left\{\left|\sum_{i=1}^n X_i\right| > \beta\right\} \le 4\exp\left(-\frac{\beta^2}{4\|\mathbf{a}\|_2^2}\right), $$
where $\|\mathbf{a}\|_2$ denotes the $\ell_2$-norm of the vector $\mathbf{a}=[a_1, \ldots, a_n]^T$.
\end{lemma}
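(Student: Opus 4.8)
The plan is to reduce the complex statement to the classical real-valued Hoeffding inequality by splitting each $X_i$ into its real and imaginary parts. Write $X_i = U_i + \sqrt{-1}\,V_i$, where $U_i = \mathrm{Re}(X_i)$ and $V_i = \mathrm{Im}(X_i)$. Since $\mathbb{E}X_i = 0$, taking real and imaginary parts gives $\mathbb{E}U_i = \mathbb{E}V_i = 0$, and since $|U_i| \le |X_i| \le a_i$ and $|V_i| \le |X_i| \le a_i$, both families are bounded by the same constants $a_i$. Moreover, the independence of the $X_i$ carries over, so that $U_1,\ldots,U_n$ are mutually independent and likewise $V_1,\ldots,V_n$ are mutually independent. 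Thus each of the two real sums $S_R = \sum_{i=1}^n U_i$ and $S_I = \sum_{i=1}^n V_i$ satisfies the hypotheses of the ordinary real-valued Hoeffding inequality with the \emph{same} weight vector $\mathbf{a} = [a_1,\ldots,a_n]^T$.

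The second step is a geometric decomposition of the tail event. Writing $S = \sum_{i=1}^n X_i = S_R + \sqrt{-1}\,S_I$, we have $|S|^2 = S_R^2 + S_I^2$. If both $|S_R| \le \beta/\sqrt{2}$ and $|S_I| \le \beta/\sqrt{2}$, then $|S|^2 \le \beta^2$; contrapositively,
\[
\left\{|S| > \beta\right\} \subseteq \left\{|S_R| > \tfrac{\beta}{\sqrt{2}}\right\} \cup \left\{|S_I| > \tfrac{\beta}{\sqrt{2}}\right\}.
\]
Applying the union bound then yields $\Pr\{|S| > \beta\} \le \Pr\{|S_R| > \beta/\sqrt{2}\} + \Pr\{|S_I| > \beta/\sqrt{2}\}$.

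Finally I would invoke the real-valued Hoeffding inequality on each term. For real, independent, zero-mean variables bounded by $a_i$ in absolute value, the classical bound reads $\Pr\{|\sum_i U_i| > t\} \le 2\exp(-t^2/(2\|\mathbf{a}\|_2^2))$. Substituting $t = \beta/\sqrt{2}$ gives $2\exp(-\beta^2/(4\|\mathbf{a}\|_2^2))$ for each of $S_R$ and $S_I$; summing the two identical bounds produces the prefactor $4$ and the claimed estimate $4\exp(-\beta^2/(4\|\mathbf{a}\|_2^2))$. The argument presents no real obstacle; the only points requiring care are verifying that the same constants $a_i$ control both the real and imaginary parts (so that $\|\mathbf{a}\|_2$ is unchanged across the two applications) and tracking the factor $1/\sqrt{2}$ through the exponent, which is precisely what turns the real-Hoeffding exponent $t^2/(2\|\mathbf{a}\|_2^2)$ into $\beta^2/(4\|\mathbf{a}\|_2^2)$ and thereby accounts for the extra factor of two appearing both inside the exponential and in front of it.
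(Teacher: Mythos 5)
Your proof is correct and follows essentially the same route as the paper: split each $X_i$ into real and imaginary parts, bound the tail event by the union of the two events $\{|S_R|>\beta/\sqrt{2}\}$ and $\{|S_I|>\beta/\sqrt{2}\}$, and apply the real Hoeffding inequality to each. The only difference is that you spell out the constant-tracking that the paper leaves implicit.
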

\noindent{\emph{Proof:}} Let $X_i^R$ and $X_i^I$ be the real and imaginary parts of $X_i$, respectively.  Then, we have ${\mathbb E}X_i^R={\mathbb E}X_i^I=0$, $|X_i^R|\le a_i$ and $|X_i^I|\le a_i$ for $i=1,\ldots,n$, and
\begin{eqnarray*}
 &&\Pr\left\{\left|\sum_{i=1}^n X_i\right| > \beta\right\}\\
  &\le& \Pr\left\{\left|\sum_{i=1}^n X_i^R\right| > \frac{\beta}{\sqrt{2}} \right\} + \Pr\left\{\left|\sum_{i=1}^n X_i^I\right| > \frac{\beta}{\sqrt{2}} \right\}. 
 \end{eqnarray*}
The desired result then follows from an application of the real version of the Hoeffding inequality \cite{H63}.
$\hfill\blacksquare$

We are now ready to present Theorem \ref{thm:random}.  It extends Nemirovski et al's result in \cite{NRT99}, which is concerned with real homogeneous QCQP problems, to the case of \emph{complex} homogeneous QCQP problems.

\begin{theorem}\label{thm:random}
The vector $\mathbf{t}$ returned by Algorithm \ref{alg:2} is well defined and is a feasible solution to Problem \eqref{eqn:QCQP}. Moreover, for any $\alpha>0$, we have
\begin{equation}\label{eqn:prob}
\Pr\left\{\mathbf{t}^H\mathbf{A}\mathbf{t}\geq \frac{1}{\alpha}\mathrm{tr}(\mathbf{A}\mathbf{X}^*)\right\}\geq 1 - 4(K+1)\mu \exp\left(-\frac{\alpha}{4}\right),
\end{equation}
where $\mu=\min\{\sqrt{K+1}, M_S\}$.  In particular, with probability at least $1-4(K+1)\mu\exp\left(-\frac{\alpha}{4}\right)$, the objective value of the solution returned by Algorithm \ref{alg:2} is at least $\frac{1}{\alpha}$ times the optimal value of the QCQP problem (\ref{eqn:QCQP}).
\end{theorem}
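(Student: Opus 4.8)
The plan is to push everything through the two congruence transformations built into Algorithm \ref{alg:2} so that each quadratic form at $\mathbf{t}$ becomes a quadratic form in the random phase vector $\mathbf{\xi}$. Writing $D=\max_{1\le k\le K+1}\mathbf{\xi}^H\widehat{\mathbf{Q}}_k\mathbf{\xi}$ for the squared normalization and recalling $\mathbf{t}=D^{-1/2}\mathbf{\Delta}^H\mathbf{U}\mathbf{\xi}$, the first step is the master identity, valid for every $\mathbf{M}\in\{\mathbf{A},\mathbf{Q}_1,\ldots,\mathbf{Q}_{K+1}\}$,
$$ \mathbf{t}^H\mathbf{M}\mathbf{t}=\frac{\mathbf{\xi}^H\mathbf{U}^H\mathbf{\Delta}\mathbf{M}\mathbf{\Delta}^H\mathbf{U}\mathbf{\xi}}{D}=\frac{\mathbf{\xi}^H\widehat{\mathbf{M}}\mathbf{\xi}}{D}, $$
which follows by direct substitution since $\widehat{\mathbf{M}}=\mathbf{U}^H\mathbf{\Delta}\mathbf{M}\mathbf{\Delta}^H\mathbf{U}$. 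Feasibility then drops out at once: taking $\mathbf{M}=\mathbf{Q}_k$ gives $\mathbf{t}^H\mathbf{Q}_k\mathbf{t}=\mathbf{\xi}^H\widehat{\mathbf{Q}}_k\mathbf{\xi}/D\le 1$ for all $k$, because the numerator never exceeds the maximum $D$; since $\mathbf{Q}_{K+1}=P_{S,max}^{-1}\mathbf{I}$, this simultaneously verifies the interference constraints \eqref{sdp-constraint14b} and the power constraint \eqref{sdp-constraint14c}.

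The crux is to notice that the objective's numerator is in fact \emph{deterministic}. Because $\widehat{\mathbf{A}}$ is diagonal (Step 2 of Algorithm \ref{alg:2}) and $\mathbf{\xi}$ has unit-modulus entries,
$$ \mathbf{\xi}^H\widehat{\mathbf{A}}\mathbf{\xi}=\sum_i[\widehat{\mathbf{A}}]_{ii}\,|[\mathbf{\xi}]_i|^2=\mathrm{tr}(\widehat{\mathbf{A}})=\mathrm{tr}(\widetilde{\mathbf{A}})=\mathrm{tr}(\mathbf{A}\mathbf{X}^*). $$
This also settles well-definedness: in the nontrivial case $\mathrm{tr}(\mathbf{A}\mathbf{X}^*)>0$, the identity $\mathbf{\xi}^H\widehat{\mathbf{A}}\mathbf{\xi}=(\mathbf{\Delta}^H\mathbf{U}\mathbf{\xi})^H\mathbf{A}(\mathbf{\Delta}^H\mathbf{U}\mathbf{\xi})>0$ forces $\mathbf{\Delta}^H\mathbf{U}\mathbf{\xi}\ne\mathbf{0}$, whence $\mathbf{\xi}^H\widehat{\mathbf{Q}}_{K+1}\mathbf{\xi}=P_{S,max}^{-1}\|\mathbf{\Delta}^H\mathbf{U}\mathbf{\xi}\|_2^2>0$, so $D>0$ and $\mathbf{t}$ is well defined.

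With the numerator pinned down, the probabilistic claim \eqref{eqn:prob} collapses into a statement about the denominator alone. Combining the master identity with the deterministic value gives $\mathbf{t}^H\mathbf{A}\mathbf{t}=\mathrm{tr}(\mathbf{A}\mathbf{X}^*)/D$, so the target event is exactly $\{D\le\alpha\}=\{\max_k\mathbf{\xi}^H\widehat{\mathbf{Q}}_k\mathbf{\xi}\le\alpha\}$. I would bound its complement by first invoking Lemma \ref{lem:A} to get $\Pr\{D>\alpha\}\le\Pr\{\mathcal{A}\}$ with $\mathcal{A}=\bigcup_{k,j}\mathcal{A}_{kj}$, then applying a union bound to reduce to the single-term probabilities $\Pr\{\mathcal{A}_{kj}\}$. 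For each such term I write $\mathbf{\xi}^H\mathbf{f}^{kj}(\mathbf{f}^{kj})^H\mathbf{\xi}=|(\mathbf{f}^{kj})^H\mathbf{\xi}|^2=\big|\sum_i X_i\big|^2$ with $X_i=\overline{[\mathbf{f}^{kj}]_i}[\mathbf{\xi}]_i$; since $[\mathbf{\xi}]_i=e^{j\theta_i}$ with $\theta_i$ uniform on $[0,2\pi)$, each $X_i$ has mean zero and $|X_i|=|[\mathbf{f}^{kj}]_i|$, so $\|\mathbf{a}\|_2^2=\|\mathbf{f}^{kj}\|_2^2$. The complex Hoeffding inequality (Lemma \ref{lem:Hoeffding}) with $\beta=\sqrt{\alpha}\,\|\mathbf{f}^{kj}\|_2$ then yields $\Pr\{\mathcal{A}_{kj}\}\le 4\exp(-\alpha/4)$, the norms cancelling. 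Summing over the $k=1,\ldots,K+1$ and $j=1,\ldots,\mu$ pairs (Fact \ref{fact:Qk}) gives $\Pr\{\mathcal{A}\}\le 4(K+1)\mu\exp(-\alpha/4)$, and taking complements proves \eqref{eqn:prob}; the ``in particular'' statement follows since $\mathrm{tr}(\mathbf{A}\mathbf{X}^*)$, the optimal value of the relaxation \eqref{eqn:SDP1}, upper-bounds the optimal value of \eqref{eqn:QCQP}.

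I expect the main obstacle to be the single-term tail estimate, namely casting $\mathcal{A}_{kj}$ as a deviation of a sum of independent, zero-mean, bounded complex random variables so that Lemma \ref{lem:Hoeffding} applies with the $\|\mathbf{f}^{kj}\|_2$ factors cancelling to leave the clean bound $4\exp(-\alpha/4)$. Once that estimate is in hand, the remaining ingredients---the congruence identity, the diagonalization observation that makes the numerator constant, and the union bound---are essentially bookkeeping.
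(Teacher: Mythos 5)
Your proposal is correct and follows essentially the same route as the paper's proof: the congruence identity that turns each quadratic form at $\mathbf{t}$ into one in $\mathbf{\xi}$, the observation that $\mathbf{\xi}^H\widehat{\mathbf{A}}\mathbf{\xi}=\mathrm{tr}(\mathbf{A}\mathbf{X}^*)$ is deterministic because $\widehat{\mathbf{A}}$ is diagonal and $\mathbf{\xi}$ has unit-modulus entries, and the reduction of the tail event to $\Pr\{\mathcal{A}\}$ via Lemma \ref{lem:A}, a union bound, and the complex Hoeffding inequality of Lemma \ref{lem:Hoeffding} with $\beta=\sqrt{\alpha}\|\mathbf{f}^{kj}\|_2$. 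The only (minor) divergence is the well-definedness step, where you deduce $\mathbf{\Delta}^H\mathbf{U}\mathbf{\xi}\neq\mathbf{0}$ from $\mathrm{tr}(\mathbf{A}\mathbf{X}^*)>0$ while the paper appeals directly to $\mathbf{Q}_{K+1}\succ\mathbf{0}$; both are sound, and yours is arguably the more careful of the two.
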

\noindent\emph{Proof:} We first prove that $\mathbf{t}$ is well defined, i.e., $\max_{1\le k\le K+1} \mathbf{\xi}^H\widehat{\mathbf{Q}}_k\mathbf{\xi}>0$. To see this, note that $\mathbf{\xi}^H\widehat{\mathbf{Q}}_k\mathbf{\xi}=\mathbf{\xi}^H\mathbf{U}^H\mathbf{\Delta}\mathbf{Q}_k\mathbf{\Delta}^H\mathbf{U}\mathbf{\xi}=\tilde{\mathbf{t}}^H\mathbf{Q}_k\tilde{\mathbf{t}}$, where $\tilde{\mathbf{t}}=\mathbf{\Delta}^H\mathbf{U}\mathbf{\xi}$.  Since $\mathbf{Q}_{K+1}\succ\mathbf{0}$, it follows that $\max_{1\le k\le K+1} \tilde{\mathbf{t}}^H\mathbf{Q}_k\tilde{\mathbf{t}}$, and hence $\max_{1\le k\le K+1} \mathbf{\xi}^H\widehat{\mathbf{Q}}_k\mathbf{\xi}$, must be strictly larger than zero.

Now, observe that
\begin{eqnarray*}
&&\mathbf{t}^H\mathbf{Q}_k\mathbf{t} = \frac{1}{\max_k \mathbf{\xi}^H\widehat{\mathbf{Q}}_k\mathbf{\xi}}\mathbf{\xi}^H\mathbf{U}^H\mathbf{\Delta}\mathbf{Q}_k\mathbf{\Delta}^H\mathbf{U}\mathbf{\xi} \\
&=& \frac{1}{\max_k \mathbf{\xi}^H\widehat{\mathbf{Q}}_k\mathbf{\xi}}\mathbf{\xi}^H\widehat{\mathbf{Q}}_k\mathbf{\xi} \leq 1 
\end{eqnarray*}
for $k=1,\ldots,K+1$.  It follows that $\mathbf{t}$ is a feasible solution to \eqref{eqn:QCQP}.

Next, we compute
\begin{eqnarray*}
&& \mathbf{t}^H\mathbf{A}\mathbf{t} = \frac{1}{\max_k \mathbf{\xi}^H\widehat{\mathbf{Q}}_k\mathbf{\xi}}\mathbf{\xi}^H\widehat{\mathbf{A}}\mathbf{\xi} \\
&=& \frac{1}{\max_k \mathbf{\xi}^H\widehat{\mathbf{Q}}_k\mathbf{\xi}}\mathrm{tr}(\widehat{\mathbf{A}}) = \frac{1}{\max_k \mathbf{\xi}^H\widehat{\mathbf{Q}}_k\mathbf{\xi}}\mathrm{tr}(\mathbf{A}\mathbf{X}^*), 
\end{eqnarray*}
where the second equality is due to the fact that $\widehat{\mathbf{A}}$ is a diagonal matrix and $|[\xi]_i|^2=1$ for $i=1,\ldots,M_S$.  Hence, to prove the bound in \eqref{eqn:prob}, it suffices to show that
\begin{equation} \label{eqn:prob1}
   \Pr\left\{\max_{1\le k\le K+1} \mathbf{\xi}^H\widehat{\mathbf{Q}}_k\mathbf{\xi}>\alpha\right\} < 4(K+1)\mu \exp\left(-\frac{\alpha}{4}\right).
\end{equation}
Now, by Lemma \ref{lem:A}, we have $\Pr\left\{\max_k \mathbf{\xi}^H\widehat{\mathbf{Q}}_k\mathbf{\xi}>\alpha\right\}=\Pr\{\mathcal{A}\}\leq\sum_{k,j}\Pr\{\mathcal{A}_{kj}\}$.  Moreover, since $\mathbb{E}\{[\mathbf{\xi}]_i[\mathbf{f}^{kj}]_i\}=0$ and $|[\mathbf{\xi}]_i[\mathbf{f}^{kj}]_i|=|[\mathbf{f}^{kj}]_i|$ for $i=1,\ldots,M_S$, we have
\begin{eqnarray*}
 &&\Pr\{\mathcal{A}_{kj}\} = \Pr\{|\mathbf{\xi}^H\mathbf{f}^{kj}|>\sqrt{\alpha}\|\mathbf{f}^{kj}\|_2\} \\
& =& \Pr\left\{\left|\sum_{i=1}^{M_S}[\mathbf{\xi}]_i[\mathbf{f}^{kj}]_i\right|>\sqrt{\alpha}\|\mathbf{f}^{kj}\|_2\right\} < 4\exp\left(-\frac{\alpha}{4}\right) 
\end{eqnarray*}
by Lemma \ref{lem:Hoeffding}.  This establishes \eqref{eqn:prob1} and hence the bound in \eqref{eqn:prob}.

Finally, the last statement in the theorem follows from the observation that $\mathrm{tr}(\mathbf{A}\mathbf{X}^*)$ is an upper bound on the optimal value of the QCQP problem \eqref{eqn:QCQP}, as (\ref{eqn:SDP1}) is a relaxation of (\ref{eqn:QCQP}).  This completes the proof of Theorem \ref{thm:random}.
$\hfill\blacksquare$



Before leaving this section, we emphasize that the optimal beamforming vector $\mathbf{t}_S^*$ can always be found efficiently in Scenario 3 through a matrix eigenvalue-eigenvector computation, regardless of the number of primary links. In Scenarios 1 and 2, however, the optimal solution can be obtained in polynomial time only when $K$ is no larger than two. Otherwise, we can only find an approximate solution in polynomial time via Algorithm \ref{alg:2}. Fortunately, as we will show in the next section, the approximate solution is nearly optimal most of the time.

\section{Numerical Simulations}\label{sec:simulation}
In this section, the performance of the proposed algorithms are evaluated through simulations. Throughout this section, we assume that all stations are equipped with 4 antennae. The wireless fading channel is Rayleigh distributed, and path loss exponent equals 4. The length of the secondary link is 10 meters, and $P_{S,max}$ is chosen in such a way that the average SNR received by each antenna at the secondary receiver is 10dB, if there is no interference. We also assume that all primary users transmit at power $P_{S,max}$. Likewise, the transmit beamforming vector $\mathbf{t}_k$ of primary user $k$ are set to be the dominant right singular vector of  $\mathbf{H}_{k,k}$. Meanwhile, the primary receivers use MMSE beamforming vectors, as given in Subsection II-C. Unless otherwise stated, $\delta_k$ is set to $1\%$ for all $k$ in Scenarios 2 and 3. Each point in the figures is an average of 50000 independent simulation runs.

\subsection{$K=2$}

We first investigate a network with one secondary link and two primary links. The primary links are placed such that the distances between the secondary transmitter and the two primary receivers are 15 and 13 meters, respectively, while the distances between the secondary receiver and the two primary transmitters are 12.4 and 12.7 meters. As discussed in previous sections, the optimal beamforming solution $\mathbf{t}_S^*$ can be found in polynomial time in this case.

In Fig. \ref{fig:K2SINR}, the optimal SINR $\gamma_S^*=(\mathbf{t}_S^*)^H\mathbf{A}\mathbf{t}_S^*$ (in dB scale) is plotted against $\epsilon_k$, when $\frac{\epsilon_k}{N_0}$ varies from 0 to 10 dB for all $k$. It is not surprising to see that $\gamma_S^*$ increases with the increase of the tolerable interference $\epsilon_k$ at the primary receivers. Meanwhile, the more channel information at the secondary transmitter, the higher the SINR at the secondary receiver, especially when $\epsilon_k$ is low. Noticeably, the SINR gap between the three scenarios narrows when $\epsilon_k$ increases. This is because when the primary users can tolerate higher interference levels, the secondary user can spend less effort in eliminating interference to the primary users. Hence, the advantage of knowing $\mathbf{H}_{k,S}$ and $\mathbf{r}_{k}$ becomes less obvious.

Fig. \ref{fig:K2outage} illustrates the tradeoff between the optimal SINR $\gamma_S^*$ of the secondary link and the outage probability $\delta_k$ of the primary links. It is not surprising that in both Scenarios 2 and 3, the secondary link can achieve a higher SINR when the primary links can tolerate a higher outage probability.

\begin{figure}[!ht]
\centering
\includegraphics[width=0.5\textwidth]{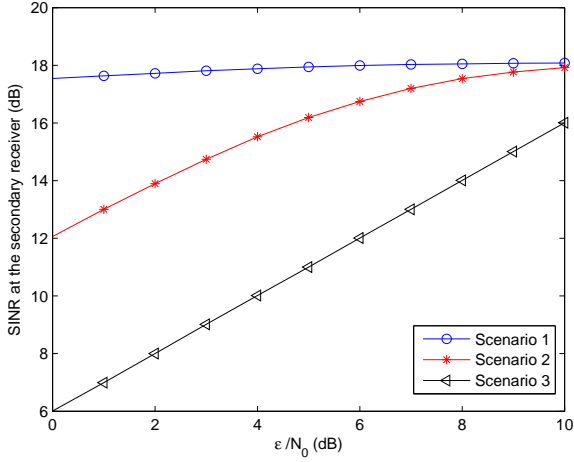}
\caption{SINR at the secondary receiver vs. $\epsilon_k/N_0$ when $K=2$.}\label{fig:K2SINR}
\end{figure}

\begin{figure}[!ht]
\centering
\includegraphics[width=0.5\textwidth]{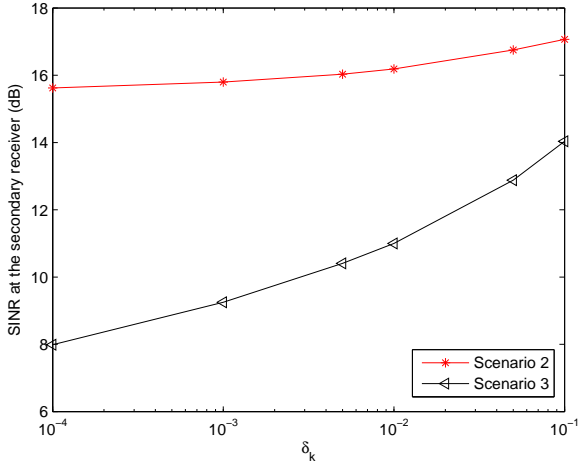}
\caption{Tradeoff between SINR and outage probability $\delta_k$ when $\epsilon_k/N_0=5dB$ and $K=2$.}\label{fig:K2outage}
\end{figure}

\subsection{$K=4$}
We now simulate a network with one secondary link and four primary links. The distance between the secondary transmitter and the four primary receivers are 20, 18, 15 and 13 meters, while that between the secondary receiver and the four primary transmitters are 16, 14, 12.4 and 13.2 meters, respectively. With four primary links, only approximate solutions can be obtained in polynomial time in Scenarios 1 and 2, as discussed in Section \ref{sec:largeK}.

In Fig. \ref{fig:K4SINR}, the randomized algorithm in Algorithm \ref{alg:2} is carried out to obtain the beamforming vector in Scenarios 1 and 2. The optimal beamforming vector in Scenario 3 is obtained through an eigenvalue-eigenvector computation. For comparison, we also plot the optimal value of the SDP relaxation \eqref{eqn:SDP1}, which is an upper bound on the maximum achievable SINR. As the figure shows, the randomized algorithm performs very close to the optimum. The achieved SINR almost overlaps with the upper bound of the optimal SINR. Meanwhile, similar conclusions drawn from Fig. \ref{fig:K2SINR} also apply here.

The tradeoff between $\gamma_S^*$ and $\delta_k$ in the four primary link case is illustrated in Fig. \ref{fig:K4outage}. Recall that in Scenario 3, the only feasible solution when $\delta_k=0$ is $\mathbf{t}_S=\mathbf{0}$. Fortunately, the achievable SINR in Scenario 3 increases rapidly with $\delta_k$ as long as $\delta_k>0$, as shown in both Fig. \ref{fig:K2SINR} and \ref{fig:K4SINR}.

\begin{figure}[!ht]
\centering
\includegraphics[width=0.5\textwidth]{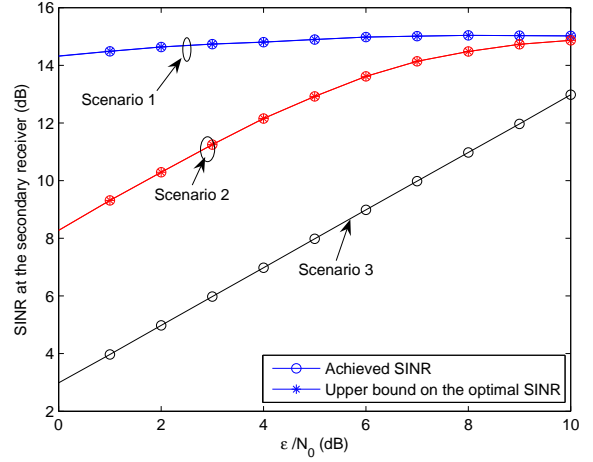}
\caption{SINR at the secondary receiver vs. $\epsilon_k/N_0$ when $K=4$.}\label{fig:K4SINR}
\end{figure}

\begin{figure}[!ht]
\centering
\includegraphics[width=0.5\textwidth]{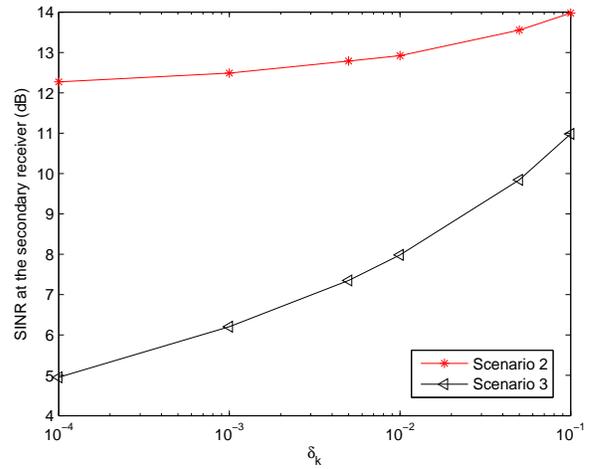}
\caption{Tradeoff between SINR and outage probability $\delta_k$ when $\epsilon_k/N_0=5dB$ and $K=4$.}\label{fig:K4outage}
\end{figure}

\subsection{A Grid Network with 9 Primary Links}
In this subsection, we consider a network with 9 primary links arranged in a 70-by-40 meter grid as shown in Fig. \ref{fig:grid}. The lengths of all link are equal to 10 meters. The secondary link is randomly placed in the area. In Fig. \ref{fig:K9SINR}, SINR at the secondary receiver is plotted against $\frac{\epsilon_k}{N_0}$. Each point in the curve is an average of 20000 independent secondary-link placements.

Again, the figure shows that Algorithm \ref{alg:2} performs very close to the optimum. The achieved SINR almost overlaps with the upper bound of its optimal value. With full CSI, Scenario 1 can achieve a much higher SINR than Scenarios 2 and 3, especially when $\epsilon_k$ is small. The better performance, however, comes at a price. In practical systems where full CSI is not available, one has to resort to the schemes developed for Scenarios 2 and 3 to achieve the maximum SINR.

\begin{figure}[!ht]
\centering
\includegraphics[width=0.5\textwidth]{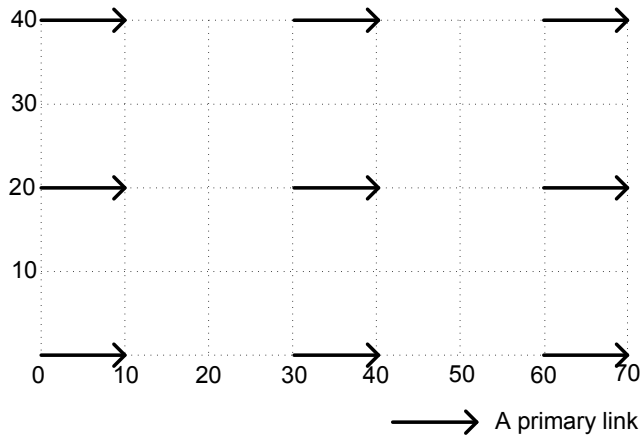}
\caption{Placement of 9 primary links.}\label{fig:grid}
\end{figure}

\begin{figure}[!ht]
\centering
\includegraphics[width=0.5\textwidth]{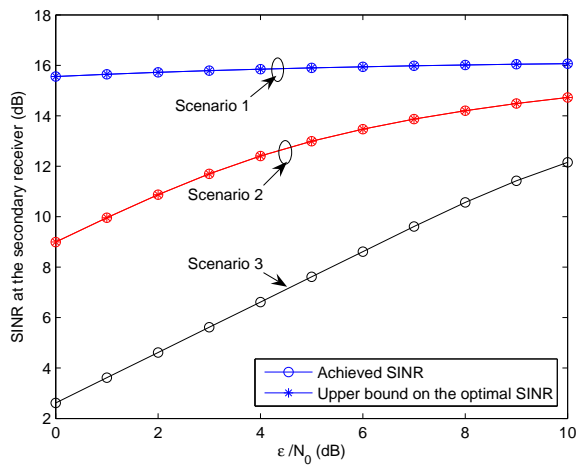}
\caption{SINR at the secondary receiver vs. $\epsilon_k/N_0$ when $K=9$.}\label{fig:K9SINR}
\end{figure}

\section{Conclusions and Discussions}\label{section:conclusions}
In this paper, we considered optimal secondary-link beamforming in MIMO CR networks when the secondary transmitter has complete, partial, or no channel knowledge on the links to primary receivers. We proposed a unified homogeneous QCQP formulation for all three scenarios with either deterministic or probabilistic interference-temperature constraints. In Scenario 3, the QCQP problem reduces to a matrix eigenvalue-eigenvector computation problem, which can be solved very efficiently. For Scenarios 1 and 2, we approached the QCQP problem by SDP relaxation. Notably, the SDP relaxation admits no gap with the true optimal value when there are no more than two primary links. In this case, the optimal beamforming solution can be computed in polynomial time. When the number of primary users exceeds two, we proposed a randomized polynomial-time algorithm that can construct a provably near-optimal solution to the QCQP problem from an optimal solution to the SDP.

The reader may notice that there is a gap between the theoretical performance of the randomized polynomial-time algorithm (Algorithm \ref{alg:2}) as established in Section \ref{sec:largeK} and its practical performance as demonstrated in Section \ref{sec:simulation}.  This can be attributed to the fact that the main theoretical result in Section \ref{sec:largeK}, namely Theorem \ref{thm:random}, only provides a {\it worst-case} guarantee on the performance of Algorithm \ref{alg:2}.  In other words, the guarantee is valid regardless of the distribution of the input data.  However, in the setting of MIMO CR networks, the input data follow a specific probability distribution, and the worst-case instance may not arise too frequently.  It would be interesting to see whether one can obtain better theoretical guarantees by performing a {\it probabilistic analysis} of the performance of Algorithm \ref{alg:2} (see \cite{SY10} and the references therein for related work).

So far, we have considered one secondary link only. However, the proposed schemes can be easily extended to a multiple-secondary-link system with the aid of medium-access-control (MAC). Suppose that there is a narrowband busy-tone channel in addition to the data-transmission channel. When a secondary link wishes to transmit a packet, it first senses the channel to see whether there is another secondary link transmitting. If not, it sends a short busy tone on the busy-tone channel to reserve the airtime. Other secondary links, upon hearing the busy tone, will keep silent during the airtime reserved by the transmitting link. Having successfully reserved the airtime, the link will then start to transmit its data packet on the data-transmission channel. In case more than one secondary transmitter sends busy tones at the same time, a collision has occurred on the busy-tone channel and the secondary transmitters will each wait for a random time period before attempting again. By doing so, it is guaranteed that there is only one secondary link transmitting data packets at a time, and the proposed optimal beamforming methods can be applied. For practical implementation, we can adopt the random-access protocols in IEEE 802.11 wireless local area networks (WLANs), such as RTS/CTS DCF, to coordinate the contention on the busy-tone channel.

Note that multiple secondary links can also coexist without the aid of a MAC protocol by properly configuring their beamforming vectors, preferably in a distributed manner. In this case, secondary links interfere with each other, and thus the optimal beamforming problem becomes much more challenging. We will address this problem in our future research.

\bibliographystyle{IEEEbib}
\bibliography{sdpbib-jsac}

\end{document}